\newcommand{\qed}{\hfill \ensuremath{\Box}}
\newcommand{\x}{arXiv:}
\newcommand{\m}{\mathrm}
\newtheorem{proposition}{Proposition}
\newtheorem{proof}{{\em Proof}}
\def\sideremark#1{\ifvmode\leavevmode\fi\vadjust{\vbox to0pt{\vss
 \hbox to 0pt{\hskip\hsize\hskip1em
 \vbox{\hsize2cm\tiny\raggedright\pretolerance10000
 \noindent #1\hfill}\hss}\vbox to8pt{\vfil}\vss}}}%
\begin{document}
\thispagestyle{empty}
\begin{center}

\null \vskip-1truecm \vskip2truecm

{\Large{\bf \textsf{Cold Black Holes in the Harlow-Hayden Approach to Firewalls}}}

\vskip1truecm

{\large \textsf{
}}

\vskip0.1truecm
\textbf{\textsf{Yen Chin Ong}}\\
{\small \textsf{
(1) Nordita, KTH Royal Institute of Technology and Stockholm University, \\ Roslagstullsbacken 23,
SE-106 91 Stockholm, Sweden\\
(2) Leung Center for Cosmology and Particle Astrophysics \&\\ Graduate Institute of Astrophysics,  \\National Taiwan University,
Taipei 10617, Taiwan}\\
\textsf{Email: ongyenchin@member.ams.org}}\\

\vskip0.4truecm
\textbf{\textsf{Brett McInnes}}\\
{\small \textsf{Department of Mathematics,\\ National
  University of Singapore, Singapore 119076}\\
\textsf{Email: matmcinn@nus.edu.sg}}\\

\vskip0.4truecm
\textbf{\textsf{Pisin Chen}}\\
{\small \textsf{(1) Leung Center for Cosmology and Particle Astrophysics \& \\ Graduate Institute of Astrophysics \& Department of Physics,\\  National Taiwan University,
Taipei 10617, Taiwan\\
(2)  Kavli Institute for Particle Astrophysics and Cosmology, \\SLAC National Accelerator Laboratory, Stanford University, CA 94305, U.S.A} \\
\textsf{Email: pisinchen@phys.ntu.edu.tw}}\\

\end{center}
\vskip1truecm \centerline{\textsf{ABSTRACT}} \baselineskip=15pt

\medskip
Firewalls are controversial principally because they seem to imply departures from general relativistic expectations in regions of spacetime where the curvature need not be particularly large. One of the virtues of the Harlow-Hayden approach to the firewall paradox, concerning the time available for decoding of Hawking radiation emanating from charged AdS black holes, is precisely that it operates in the context of \emph{cold} black holes, which are not strongly curved outside the event horizon. Here we clarify this point. The approach is based on ideas borrowed from applications of the AdS/CFT correspondence to the quark-gluon plasma. Firewalls aside, our work presents a detailed analysis of the thermodynamics and evolution of evaporating charged AdS black holes with flat event horizons. We show that, in one way or another, these black holes are always eventually destroyed in a time which, while long by normal standards, is short relative to the decoding time of Hawking radiation.

\newpage
\addtocounter{section}{1}
\section* {\large{\textsf{1. Decoding vs. the Lifetimes of Charged Black Holes}}}

More than forty years after the discovery of Hawking radiation \cite{Hawking1, Hawking2}, its precise nature and consequences are still far from being fully understood. Although one should not discount the possibility that technological advances might lead to progress on the experimental \cite{haret, CT, Laser, Unruh, Unruh2} or observational \cite{Li} fronts in the future, at present we are forced to rely on general physical ideas in order to make progress.

Even on this front, however, there are difficulties. A basic guiding principle of quantum gravity research has long been that the quantum theory should reproduce the successes of classical General Relativity in the case of arbitrarily small spacetime curvature. The firewall controversy \cite{amps,kn:apologia} therefore threatens to develop into a serious crisis, since it seems to imply that our current general ideas regarding quantum gravitational fields will lead to theories that fail to satisfy this most basic criterion. This is because the firewall physically indicates the local presence of an event horizon, even when the associated spacetime curvature is negligible. [There are, of course, many other objections to firewalls: see for example \cite{kn:yeom}.]

There are two obvious possible fates for the information associated with a black hole: it may simply be lost \cite{kn:Wald, kn:ellis}, or it may, by some very subtle process which may or may not involve firewalls \cite{kn:stu, kn:raju, kn:HM, MS, kn:frolov} , be completely preserved.
However, it has become apparent that the ``physics of information'' \cite{kn:landauer, kn:adami, kn:reeb}, in particular the applications of information theory in gravitational physics \cite{kn:Bekenstein, kn:hp, kn:suss, kn:suss1}, may lead to other outcomes.

In particular, quantum information theory is much concerned with the \emph{time required to decode a signal}, and, in a work which has attracted much attention, Harlow and Hayden \cite{kn:HH} [see also \cite{suss2}] have proposed that this could be a key issue. The firewall argument assumes that infalling observers can make use of the information encoded in the Hawking radiation they received prior to reaching the event horizon. However, the decoding of Hawking radiation typically takes vast amounts of time, exceeding even the lifetime of an evaporating black hole\footnote{There are other effects which can be taken into account, but all of them tend to reinforce the idea that decoding of Hawking radiation may not be possible. Firstly, note that ``collecting'' Hawking radiation is not quite straightforward, as black holes radiate in all spatial directions. An infalling observer needs to devise a scheme to intercept and collate all of the Hawking radiation. It is not entirely clear that such a process is completely innocuous. It has also been argued that except for a very late and very small fraction of a black hole's lifetime, the Hawking radiation is uncorrelated with the state of the in-fallen matter \cite{BPZ}. If this is indeed the case, then an infalling observer who wishes to decode Hawking radiation will find that there is not even enough time to collect the relevant Hawking radiation [that encodes the information] before the black hole disappears. For another concern, see also \cite{IY}.}. Specifically, the conjecture in \cite{kn:HH} is that the time required is exponential in the black hole entropy. It is argued in \cite{kn:HH} that this might invalidate the firewall argument. Underlying this idea is the novel doctrine that information is truly ``physical" only if it can be \emph{decoded} [in principle].

One great advantage of the Harlow-Hayden approach is that it does not rely on understanding the precise fate of the black hole when it nears the end of evaporation. For that question is of course highly controversial: some would have it that the black hole does indeed evaporate completely, while others are willing to consider ``remnants'' \cite{kn:yak,kn:banks,kn:SH,kn:Pisin}. The emphasis in the Harlow-Hayden approach is instead on computing the time scale on which the \emph{overall} evolution occurs: one needs only to show that the \emph{longest-lived} black holes have ``short'' lifetimes when compared to the decoding time. It does not matter whether anything unusual happens at any point during this lifetime, or whether a given black hole is ``young'' or ``old''. The second great advantage is that, as we shall see, the black holes involved always have relatively low curvature outside the event horizon, so the systems we study do indeed probe precisely that regime in which the firewall argument is most controversial,\emph{ the low-curvature regime}.

Even if, as is argued in \cite{kn:apologia}, this remarkable argument does not settle the firewall problem, the idea that Hawking radiation cannot be decoded is certainly of great interest\footnote{Oppenheim and Unruh \cite{OU} recently pointed out that the Harlow-Hayden argument can be evaded by ``precomputation'' of quantum information, by forming an entangled black hole. However, this leads to superluminal signal propagation.}, and will, if correct, surely play a central role in any future complete theory of black hole evaporation. We should therefore ask:
is it really the case that black holes invariably have lifetimes shorter than the characteristic time required to decode information carried by Hawking radiation? If this is indeed so, \emph{precisely which physical effects are involved?}

Harlow and Hayden [henceforth, HH] argue that \emph{electrically charged} black holes can be expected to have lifetimes enormously longer, perhaps even infinitely longer, than their neutral counterparts: \emph{so these are the black holes that pose the most serious threat to the HH proposal}. The lifetime of a charged black hole can only be ``short'' if some additional effect intervenes.

The suggestion in \cite{kn:HH} is that string-theoretic effects, will save the day here; more precisely, that the effect known as ``\emph{AdS fragmentation}" \cite{kn:frag} destroys the charged black hole in a relatively short time. Our objective here is to be much more explicit regarding the precise nature of the physics responsible for this destruction. We shall see that the hope expressed by HH is realised in a sense [the Seiberg-Witten effect discussed below is a greatly generalized version of the effect noted in \cite{kn:frag}]; but the details are considerably more intricate than one might have expected.

It is generally accepted that the most reliable probe of quantum gravity is the AdS/CFT correspondence \cite{kn:Mal}; indeed, probably the strongest arguments in favour of the maintenance of unitarity in black hole evaporation are based on its presumed duality with a system in which unitarity is known to hold. The firewall controversy has indeed been investigated in this manner [see, for example, \cite{kn:raju,PR1,AC,MP,PR2,VV}].
However, some doubts have been raised as to whether even this powerful technique is able to deal with all aspects of black hole physics [see \cite{kn:aron} for a recent example]. It is therefore prudent to rely on some specific form of the duality which is known to work particularly well, especially when applied to \emph{charged} black holes.

There is in fact an extensive field of research in which the physics of electrically charged AdS black holes plays a central role: the application of AdS/CFT duality to the study of the Quark-Gluon Plasma [QGP]. Specifically, AdS-Reissner-Nordstr$\ddot{\m{o}}$m black holes \emph{with toral or planar event horizons} are dual to a field theory which describes a system that in many ways resembles a quark-gluon plasma inhabiting a locally flat spacetime at conformal infinity \cite{kn:solana,kn:pedraza,kn:youngman,kn:gubser,kn:janik}. [In $(n+2)$-dimensions, ``planar'' refers to event horizons with $\mathbb{R}^n$ topology; ``toral'' to the topology of the $n$-dimensional flat torus.] This version of the duality has enjoyed substantial successes [particularly with regard to the celebrated ``KSS bound'' \cite{kn:son,kn:muller}], and can claim to have some measure of experimental support. \emph{We propose to use ideas suggested by this theory to throw some light on the fate of electrically charged black holes, as they appear in the HH argument.}

When we do this, we find some unexpected answers. In particular, the duality suggests that something dramatic must happen to AdS black holes at \emph{low} temperatures [which entails low \emph{curvatures} outside the event horizon]. For, of course, the QGP cannot be expected to exist at arbitrarily low temperatures ---$\,$ it either hadronizes or undergoes a phase transition to some other, radically different [for example, ``quarkyonic''] state. This is seen in the large literature [for example \cite{kn:ohnishi,kn:mohanty,kn:satz,kn:ARSS,kn:Endorni}] devoted to the \emph{quark matter phase diagram}, which represents various states of quark matter as a function of temperature and quark chemical potential: see\footnote{In Figure 1, ALICE, RHIC, FAIR, and NICA refer to various current and projected experimental programmes \cite{kn:ilya,kn:dong,kn:fair,kn:nica,kn:rhic} designed to explore the physics of this diagram. Astrophysical phenomena such as core-collapse supernovae and neutron star mergers could also serve as arenas to study QCD phase transitions; see for example \cite{Burgio, Sagert, PisinLabun, Sasaki}.} Fig.(\ref{1}).

\begin{figure}[!h]
\centering
\includegraphics[width=1.00\textwidth]{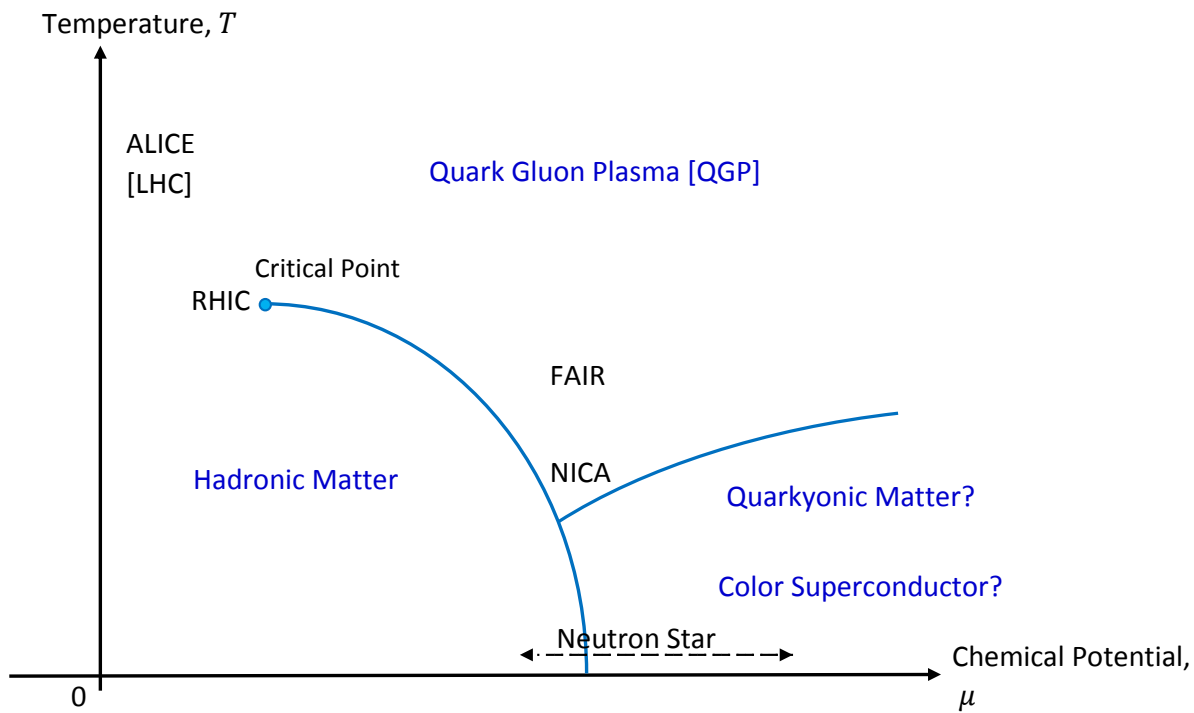}
\caption{Conjectured Quark Matter Phase Diagram. \label{1}}
\end{figure}

While many of the details remain conjectural, there is no suggestion that the plasma phase extends downward to very low temperatures, at any value of the chemical potential. In short, duality teaches us that we should expect Reissner-Nordstr$\ddot{\m{o}}$m black holes to have their lifetimes terminated by some effect which disrupts them at \emph{low} temperatures\footnote{We are of course restricting ourselves here to those AdS-Reissner-Nordstr$\ddot{\m{o}}$m black holes which do in fact give a broadly correct dual representation of the quark-gluon plasma. That immediately \emph{excludes} black holes with topologically spherical event horizons, precisely because these do suggest that the plasma phase extends down to arbitrarily low temperatures [see \cite{kn:clifford}, page 465]. Fortunately, the AdS-Reissner-Nordstr$\ddot{\m{o}}$m black holes with toral or planar event horizons, those which are in fact the ones used in applications of holography to quark matter, do not suffer from this defect \cite{kn:AdSRN}. Henceforth \emph{we confine attention to these black holes}. Note that HH do discuss [\emph{neutral}] toral black holes.}.

The reader may protest at this point: a plasma, left to its own devices, cools extremely rapidly ---$\,$ how, therefore, can this argument be relevant to black hole evaporation, which is normally taken to proceed in the opposite direction along the temperature axis? In fact, however, such behaviour for an evaporating black hole is not generic, in the following sense: the temperature of a typical [that is, with charge not exactly zero, and not already cold] black hole actually \emph{drops} initially as it evaporates. Let us explain this crucial point. [In this discussion, until further notice, we will consider the asymptotically flat case, in which the event horizon necessarily has spherical topology \cite{Hawking3, HawkingEllis}.]

When it was realised that Hawking evaporation can change the parameters of a black hole, it immediately became apparent that this posed a threat to cosmic censorship \cite{Penrose}. For clearly, if a charged or rotating black hole does not lose its charge or angular momentum at least as rapidly as it loses mass, then it is in danger of passing through one or both of the extremal limits defined by the Kerr-Newman geometry. In a classic work, Page \cite{kn:page} showed that an asymptotically flat rotating [uncharged] black hole always loses angular momentum more rapidly than it loses mass, so that censorship is safeguarded.

The charged case proved to be more difficult, and was not settled until Hiscock and Weems \cite{kn:HW} carried out a thorough numerical investigation [see also \cite{kn:SP}]. They found that, \emph{initially}, a black hole with a small but non-zero charge-to-mass ratio $Q/M$ ---$\,$ recall that the temperature is inversely related to $Q/M$, so this means that the black hole is not unusually cold at the outset ---$\,$ actually loses mass more rapidly than it loses charge as it evaporates. The temperature therefore drops, and the black hole can come quite close to extremality. However, at a certain point before that happens, the temperature reaches a non-zero minimum\footnote{There is a large literature [for recent examples, see \cite{kn:reall} and the references therein] on the question as to whether exactly extremal semi-classical black holes can exist. Note that this is not useful to us here: we need to exclude black holes with temperatures that are ``low'', \emph{not necessarily exactly zero}.}, the process reverses, and the temperature begins to rise, eventually to the arbitrarily high values made familiar by the evaporation of a Schwarzschild black hole. Censorship is again respected, but not in the simple manner of the rotational case: censorship violation is staved off ``at the last moment''. [Highly charged black holes, that is, holes which are already cold, behave more conventionally: they simply get hotter.]

In short, a generic charged asymptotically flat black hole cools at first; \emph{if it survives this cooling}, it then gets hot.
We shall see in this work that AdS-Reissner-Nordstr$\ddot{\m{o}}$m black holes with flat event horizons also undergo an initial drop in temperature; however, the numerical data strongly indicate that, in this case, the temperature \emph{always} falls, ultimately to arbitrarily small [but positive] values if no other effect disrupts the black hole\footnote{A similar pattern is observed in \cite{kn:kimwen}, even in the asymptotically flat case, though the physical argument there is very different to the one in this work.}. Charged black holes in AdS with flat event horizons, then, do behave in a manner consistent with the dual representation in terms of a cooling plasma. [Of course, a real plasma cools enormously more rapidly than the black holes considered here. \emph{Both} scales are however negligible when compared with the decoding time; showing this is our main objective.]

We can now resume the argument we were making above. In short, the evaporation of a generic charged AdS black hole with a flat event horizon causes the temperature to drop. But if the black hole becomes sufficiently cold, then it must cease to exist as a black hole, just as the dual plasma must cease to exist as a plasma as it cools. Thus the lifetime must be cut short, as HH require. We stress again that the black holes in our analysis are such that the curvature outside the event horizon is always small [around $144/L^4$, where $L$ is the asymptotic AdS curvature scale], so we are directly probing the low-curvature regime where firewalls are supposed to arise.

A less agreeable aspect of the argument, thus far, is that it is like an existence proof. It convinces us that something happens to the black hole as it cools, but it does not explain what that might be; and indeed the effect must be an unusual one, since we are more accustomed to quantum gravity effects becoming important at high, not low, temperatures and curvatures. Our objective in this work is to remedy this. That is, we wish to answer the question: \emph{exactly which physical effect is responsible for the destruction of event horizons as charged black holes cool?}

Of course, the idea that AdS black holes undergo drastic changes as they cool is very familiar. Hawking and Page \cite{kn:hawpa} showed that, in the case of spherical event horizons, there is a phase transition, and a similar statement is true in the case of toral [that is, flat but \emph{compact}] event horizons \cite{kn:surya}. In both cases the black hole ceases to exist; the cold phase \cite{kn:hormy} has a definite geometry, but it is not that of a black hole. [It is thought \cite{kn:confined} that this transition might give a simplified model of the hadronization of the QGP, at small chemical potentials.] This is what we need here, in order to complete the Harlow-Hayden argument\footnote{This is unlike the case of a holographic superconductor, where the effect of the transition \cite{kn:gubserabelian} is not to destroy the black hole but merely to cause it to grow ``hair'' [see however \cite{kn:zero}]. Notice too that this only occurs in response to the presence of a specific form of matter [usually a scalar field], whereas here we want it to occur for pure AdS-Reissner-Nordstr$\ddot{\m{o}}$m geometry.}.

However, we should not expect that the Hawking-Page transition is the \emph{only} effect responsible for the destruction of AdS black holes as they cool; this for two reasons.

First, while it is true that AdS black holes with flat \emph{compact} event horizons undergo a phase transition, this is \emph{not} true when the compactification scale is taken to infinity ---$\,$ that is, when we turn to \emph{planar} [rather than toral] black holes. [The transition temperature drops to zero in this case.] Hence charged \emph{planar} black holes do apparently have arbitrarily long lifetimes.

The second reason is revealed by Figure 1, which shows that the transition from the QGP state takes various forms, depending on the value of the chemical potential. For example, it has been suggested \cite{kn:andronic} that, at sufficiently high values of the chemical potential, the transition is not to the hadronic state but rather to a ``quarkyonic'' form of quark matter. A holographic account of this state is available \cite{kn:deboer}. There is no reason to think that the transition to this state is triggered by the same effect that causes the very different transition to the hadronic state. Therefore, we should in general expect to find that some other effect, apart from the Hawking-Page transition, is in some cases responsible for the disappearance of cold AdS black holes.

In short, then, we need to identify some novel effect which supplements the Hawking-Page transition in some cases, and which can, in particular, destabilize an AdS-Reissner-Nordstr$\ddot{\m{o}}$m black hole when its event horizon has \emph{either} toral \emph{or} planar topology, and when its temperature is low but not zero.

Just such an effect was found in \cite{kn:AdSRN}: Seiberg-Witten instability \cite{kn:seiberg} [see also \cite{KPR, Barbon}]. Seiberg and Witten showed that the stability of branes propagating in asymptotically AdS spacetimes depends on the way the ambient geometry affects the areas and volumes of the branes. For geometries with flat metrics at infinity, such as we have in the case of AdS-Reissner-Nordstr$\ddot{\m{o}}$m toral and planar black holes, the competition between the positive and negative terms in the brane action is particularly close. It turns out that the addition of small amounts of electric charge to a black hole with a flat event horizon has no ill-effects, that is, the brane action remains positive everywhere. But [for four-dimensional black holes] when the charge reaches about 92$\%$ of the extremal charge \cite{universal} ---$\,$ that is, \emph{when the temperature is low, but not zero} ---$\,$ the brane action becomes negative at a certain distance from the black hole, triggering a pair-production instability. In short, we have exactly what we need, supplied by basic objects in string theory.

In summary, we claim that AdS-Reissner-Nordstr$\ddot{\m{o}}$m toral \emph{and} planar black holes are indeed destroyed as they evaporate, as HH require; and that we can, for various values of the compactification parameter, identify the physical mechanism responsible: in the toral case it is a phase transition of the Hawking-Page type at low values of the chemical potential, the Seiberg-Witten effect at high values. [In the planar case, the Seiberg-Witten effect alone is responsible.]

We begin by generalizing the analysis of Hiscock and Weems [Henceforth, HW] to charged AdS toral and planar black holes, in order to substantiate our claim that the temperatures of these black holes do indeed drop when they radiate.

\addtocounter{section}{1}
\section* {\large{\textsf{2. Evaporating Charged AdS Black Holes}}}
Four-dimensional\footnote{We work in four dimensions only for the sake of simplicity; we expect the same qualitative results to hold in higher dimensions.} AdS-Reissner-Nordstr$\ddot{\m{o}}$m black holes with \emph{flat} event horizons [henceforth, ``charged flat black holes''] have metrics of the form [see \cite{kn:77}]
\begin{equation}\label{ARRGH}
g(\text{FAdSRN}) = -\, \Bigg[{r^2\over L^2}\;-\;{8 \pi M^*\over r}+{4\pi Q^{*2}\over r^2}\Bigg]dt^2\; + \;{dr^2\over {r^2\over L^2}\;-\;{8 \pi M^*\over r}+{4\pi Q^{*2}\over r^2}} \;+\;r^2\Big[d\psi^2\;+\;d\zeta^2\Big],
\end{equation}
where $\psi$ and $\zeta$ are dimensionless coordinates on a flat space, and where the mass and charge parameters $M^*$, $Q^*$, are defined as follows. In the case in which the event horizon is compact, we shall take it to be a flat square torus with area 4$\pi^2 K^2$, where $K$ is a dimensionless ``compactification parameter''. Then $M^*$ is defined as $M/(4\pi^2 K^2)$, and similarly $Q^* = Q/(4\pi^2 K^2)$, where $M$ and $Q$ are the physical mass and charge of the hole. If we wish to consider a non-compact [planar] event horizon, then we let $M$, $Q$, and $K$ tend to infinity in such a manner that $M^*$ and $Q^*$ remain finite. The densities of the mass and electric charge at the event horizon of the hole are then given, for both toral and planar cases, by $M^*/r^2_h$ and  $Q^*/r^2_h$, where $r = r_h$ locates the event horizon; note that $r_h$ can be computed if $M^*$ and $Q^*$ are given.

In a holographic approach, $M^*$ and $Q^*$ are fixed by physical properties of the dual field theory, namely its energy density and chemical potential. [The formula for the electromagnetic potential also involves $Q^*$ rather than $Q$.] Indeed, one could \emph{define} $M^*$ and $Q^*$ in that way. Similarly, the time coordinate $t$ in the above formula for $g(\text{FAdSRN})$, which does not have a simple interpretation in the bulk, can be defined as proper time at infinity [where the metric is locally Minkowskian]. Henceforth, all of our references to ``rates of change'' will implicitly involve this proper time at conformal infinity.

We will assume that the usual conditions for holography to apply will always hold: that the string coupling and the ratio of the string length scale to the AdS curvature scale $L$ are small. In particular, one should think of $L$ as ``large''. Now one can compute the Kretschmann scalar [the square of the curvature tensor] for $g(\text{FAdSRN})$: it is given by
\begin{equation}\label{KRET}
R^{abcd}R_{abcd}(\text{FAdSRN})\; = \;{8\left (96\pi^2L^4M^{*2}r^2\,-\,192\pi^2L^4M^*Q^{*2}r\,+\,112\pi^2L^4Q^{*4}\,+\,3r^8\right)\over r^8L^4}.
\end{equation}
The maximal squared curvature for any point not inside the event horizon is of course attained at the event horizon. For very cold [nearly extremal] black holes of this kind [the condition for extremality being $Q^{*6} = (27/4)\pi M^{*4}L^2$], one finds that the squared curvature takes a remarkable form:
\begin{equation}\label{KRETEVENT}
R^{abcd}R_{abcd}(\text{FAdSRN; Extremal;} \;r = r_h)\; = \; {144\over L^4}.
\end{equation}
That is, since $L$ is assumed to be ``large'', the spacetime curvature outside a cold black hole of this sort is \emph{always very small}, independent of any other parameter. Whatever happens to the event horizon of such a black hole happens in the low-curvature regime.

Despite being the arena in which quantum gravity is best understood, asymptotically AdS spacetimes do not straightforwardly allow one to study Hawking evaporation of black holes ---$\,$ ``large'' asymptotically AdS black holes with spherical event horizons, and all planar and toral AdS black holes, tend ultimately to reach thermal equilibrium with their Hawking radiation. Nevertheless, large black holes can be made to evaporate by coupling the boundary field theory with an auxiliary system, such as another CFT \cite{kn:apologia, kn:rocha, kn:Raamsdonk} [or by attaching a Minkowski space to an AdS throat geometry \cite{kn:HH}]. Admittedly, this is a somewhat dubious procedure, since the ``CFT - AUX'' system is not well-understood, especially in the non-equilibrium context where we need to use it.
Ultimately, this problem will probably only be resolved by a fully dynamical analysis, of the type reviewed in \cite{kn:chesler}. In this work, we simply assume that some mechanism of this kind\footnote{These black holes can also evaporate if one \emph{artificially} ``mines'' the black holes, an operation that overcomes the effective potential around the holes. This is discussed in, for example \cite{kn:HH}; but see \cite{Brown} for a discussion of the subtleties of such operations.} can be made to work, and investigate the consequences, following Hiscock and Weems [HW] [who of course dealt only with the asymptotically flat case].

Following HW, we will work in the so-called ``relativistic units'' \cite{Israel} in which both the speed of light $c$ and Newton's constant $G$ are unity but the reduced Planck's constant $\hbar$ is not. Consequently, $\hbar G/c^3 = \hbar \approx 3 \times 10^{-66} ~\text{cm}^2$. This means that, unlike the usual convention in which $\hbar$ is set to unity and temperature has dimension of inverse length, in our choice of units temperature has dimension of length.
The radiation constant\footnote{Note that HW simply refer to the radiation constant as Stefan-Boltzmann constant.}, which is $4/c$ times the Stefan-Boltzmann constant, is denoted by $a=\pi^2/15\hbar^3$, with Boltzmann constant $k_B=1$. Without loss of generality, we will choose the charge of the black hole to be positive. However, we will use the Lorentz-Heaviside units, in which a factor of $4\pi$ appears in the Coulomb's Law but not in Maxwell's equations. Therefore, $Q^2$ in HW will appear as $Q^2/4\pi$ in our work.  The electron charge will be $e/\sqrt{4\pi}$ and its mass $m = 10^{-21} e/\sqrt{4\pi}$, where
$e= 6 \times 10^{-34} ~\text{cm}$. In addition, $Q_0 := \hbar e/\pi m^2 \approx 3.18 \times 10^{10} ~\text{cm}$ is the inverse of the Schwinger's critical field $E_c :=\pi m^2 c^3/(\hbar e)$.

Let us start with a short review of the HW analysis.

Since the problem of charged black hole evaporation is rather complicated, HW's analysis is restricted to the case in which the black holes are cold. In the asymptotically flat case, this means that the black hole is necessarily large. Due to the low temperature, HW can reasonably assume that only massless particles are emitted via thermal Hawking evaporation, and treat charge loss as the result of Schwinger effect \cite{Schwinger}. [In fact, a result due to Gibbons \cite{Gibbons} is that, as long as the black hole is much larger in radius than the reduced Compton wavelength of the electron, that is, $M \gg \hbar/m \approx 10^{-10}$ cm $\approx 10^{18}$ g, then the pair-production of charged particles is well-approximated by flat-space quantum electrodynamics [QED]. Intuitively, for large enough mass, the curvature radius of the 2-sphere is larger than the size of an electron.]

Note that, as HW emphasise, although the production of charged particles are treated separately from the thermal Hawking flux of neutral particles in this model, they are actually all part of Hawking emission. In other words, the charged particle emission is actually thermodynamically related to a non-zero chemical potential associated with the electromagnetic field of the black hole. The effective decoupling between thermal emission of neutral particles and electromagnetic [as opposed to gravitational] creation of charged particles is due to the \emph{low temperature} of large black holes \cite{Gibbons}, although it has been argued that the Schwinger mechanism and the Hawking radiation are generically indistinguishable for near-extremal black holes \cite{CM}.

For simplicity, HW assumed that the electromagnetic field is weak enough that we may ignore the contributions of muons and other heavier charged particles, and only deal with electrons and positrons. Schwinger's formula, which describes the rate of pair creation per unit 4-volume $\Gamma$, is, in the case applicable to the HW analysis,
\begin{equation}
\Gamma = \frac{e^2}{16\pi^4 \hbar^2}\frac{Q^2}{4\pi r^4} \text{exp}\left(-\frac{4\pi^2 m^2 r^2}{\hbar e Q}\right) \times \left[1 + O\left(\frac{e^3Q}{16\pi^2 m^2 r^2}\right)\right].
\end{equation}
The ``weak-field approximation'' means that one ignores all higher order terms, which is valid provided that
\begin{equation}\label{weakfield1}
\frac{e^3 Q}{16\pi^2 m^2 r^2} \ll 1, ~~\text{for all}~ r \geq r_h,
\end{equation}
where $r_h$ locates the event horizon of the black hole.

For an asymptotically flat Reissner-Nordstr\"om black hole with mass $M$ and charge $Q$,
\begin{equation}
r_h [\text{RN}] = M + \sqrt{M^2 - \frac{Q^2}{4\pi}},
\end{equation}
and thus
\begin{equation}
M \gg \frac{e^3}{16\pi^2 m^2} \approx 3.8 \times 10^6 ~\text{cm}.
\end{equation}
That is, the black hole has to be large [and therefore cold] enough to satisfy this.

The charge loss rate is thus given by the integral
\begin{flalign}
\frac{dQ}{dt} &\approx \frac{e^3}{4\pi^3 \hbar^2} \int_{r_h}^\infty \frac{Q^2}{4\pi r^2} \exp{\left(-\frac{4\pi r^2}{Q_0Q}\right)} dr\\
& =\frac{e^3}{8\pi^{7/2} \hbar^2} \left[-\frac{Q^{3/2}}{2\sqrt{Q_0}} ~\text{erf}\left(\frac{4\pi r}{\sqrt{Q_0 Q}} \right)- \frac{Q^2}{4\pi r} \exp{\left(-\frac{4\pi r^2}{Q_0 Q}\right)}\right]\Bigg|_{r_h}^\infty.
\end{flalign}
For a sufficiently large black hole, HW can use the series approximation for the complementary error function $\text{erfc}(x) = 1- \text{erf}(x)$:
\begin{equation}\label{series}
\text{erfc}(x) = \frac{e^{-x^2}}{x\sqrt{\pi}}\left[1 + \sum_{n=1}^{\infty}  (-1)^n \frac{1 \cdot 3 \cdot 5 \cdots (2n-1)}{(2x^2)^n} \right], ~~x \gg 1,
\end{equation}
which then yields, finally, the charge loss rate
\begin{equation}
\frac{dQ}{dt} \approx - \frac{e^4}{2^8\pi^{13/2} \hbar m^2} \frac{Q^3}{r_h^3} \exp{\left(-\frac{4\pi r_h^2}{Q_0Q}\right)}.
\end{equation}

On the other hand, the mass loss of the black hole is given by
\begin{equation}\label{massloss0}
\frac{dM}{dt} = -\alpha aT^4\sigma + \frac{Q}{r_h}\frac{dQ}{dt},
\end{equation}
where the first term on the right describes thermal mass loss due to Hawking radiation\footnote{If black hole information is indeed not lost, the final Hawking radiation has a thermal \emph{spectrum} despite being in a pure state [that is, there is high degree of entanglement] instead of in a thermal \emph{state}. Of course, in many models of Hawking radiation, excitations over the thermal spectrum are to be expected. Nevertheless, the spectrum is still very close to thermal, thus justifying the use of these equations. For this distinction see \cite{kn:SH2}; see also Section V of \cite{MP} for a related discussion.}. This is just the Stefan-Boltzmann law ---$\,$ the power emitted by a black body of temperature $T$ is
\begin{equation}\label{SB}
P = \frac{a}{4}\,\alpha  \times \text{Area} \times T^4 = \frac{a}{4}\,\alpha \times 4\sigma \times T^4,
\end{equation}
where $\sigma$ is the cross section of the black body in the case of spherical symmetry, and $\alpha$ is another constant to which we shall shortly return. In the case of asymptotically flat black hole spacetimes, $\sigma$ is the geometric cross section, which is related to the innermost [unstable] photon orbit of the black hole spacetime. For asymptotically flat Reissner-Nordstr\"om geometry, we have \cite{kn:HW}
\begin{equation}
\sigma = \frac{\pi}{8} \left(3M + \sqrt{9M^2 - \dfrac{2Q^2}{\pi}}\right)^4 \left(3M^2 - \dfrac{Q^2}{2\pi} + M\sqrt{9M^2 - \dfrac{2Q^2}{\pi}}\right)^{-1},
\end{equation}
which recovers the familiar expression for the geometric optics cross section of Schwarzschild geometry [see, e.g., equation (6.3.34) of \cite{Wald}], $27\pi M^2$, in the $Q \to 0$ limit.

In addition, the mass loss due to electromagnetic pair creation is described by a term proportional to the pair-creation rate, and to the electromagnetic potential energy lost per pair created\footnote{The second term of equation (\ref{massloss0}) is just the term that appears in the first law of black hole mechanics: \newline $ dM=\dfrac{\kappa}{8\pi} dA + \dfrac{Q}{r_h} dQ + \Omega dJ$.}. The constant $\alpha$ mentioned above is a quantity of order unity that depends on how many species of massless particles are present [essentially the so-called ``greybody factor'']. HW showed that the qualitative -- and, to a large degree, the quantitative -- results are not sensitive to the exact value of $\alpha$.

More precisely, HW consider the possible number of massless neutrino species $n_\nu=0,1,2,3$. Each choice gives rise to a corresponding value of $\alpha$. The different $\alpha$'s contribute an $O(1)$ difference to the lifetime of the asymptotically flat Reissner-Nordstr\"om black hole. Admittedly, in a \emph{``stringy''} AdS bulk there would be other massless particles beyond the standard model of particle physics. Nevertheless, as we will see, the time scale involved is so enormously large that an $O(1)$ or even $O(1000)$ difference would not change the result appreciably. We will henceforth set $\alpha=1$ for simplicity. Note that if one indeed considers massless particle species in addition to the photon and graviton, then the lifetime of the black hole will in fact be \emph{shortened} [more energy radiated thermally per unit time], and this would favour the HH proposal.

This model of evaporating asymptotically flat charged black holes can be generalized to asymptotically locally AdS black holes, and in particular to those black holes with flat horizons. In the asymptotically flat case, we saw that, in order for flat-space QED to be applicable, one needs a sufficiently large black hole to ensure that the curvature radius of the underlying spherical geometry is larger than the size of an electron. For flat black holes it would seem that this condition is automatically satisfied, the curvature radius being infinite. However, there are a few subtleties here. For simplicity, let us first consider electrically neutral toral black holes. The Hawking temperature is [see, for example, \cite{Birmingham, KlemmVanzo}]
\begin{equation}
T[Q=0] = \frac{3\hbar r_h}{4\pi L^2}.
\end{equation}
We see that, unlike its asymptotically flat cousin, a toral black hole has a temperature proportional to its radius. Thus, for any fixed compactification parameter $K$, a larger black hole is hotter. This is of course related to the fact that these black holes have \emph{positive} specific heat, unlike the Schwarzschild black hole.

Since our aim is to study cold black holes, and also to use the method of HW, in which thermal mass loss can be cleanly separated from charge loss, we need to make sure that our black holes are not too hot. For a neutral black hole, this means that we want
\begin{equation}
T[Q=0] = \frac{3\hbar r_h}{4\pi L^2} < 2m.
\end{equation}
Since the event horizon for neutral toral black hole is located at a value of $r$ given by
\begin{equation}
r_h=\left(\frac{2ML^2}{\pi K^2}\right)^{\frac{1}{3}},
\end{equation}
the inequality translates to an upper bound on $M$, given by
\begin{equation}
M < 2^8 \pi^4K^2L^4 \left(\frac{m}{3\hbar}\right)^3.
\end{equation}
For $L=10^{15}$ cm, say, we get roughly
\begin{equation}\label{massbound}
M < 1.12 \times K^2 \times 10^{97} ~\text{cm}.
\end{equation}
Of course, a charged black hole will have colder temperature, and therefore, can tolerate higher upper bound on the mass without emitting charged particles thermally. Nevertheless, for convenience, we will always choose the initial condition for mass to be below the bound given in equation (\ref{massbound}).

Next, we need to find the circumstances under which the weak-field condition for the Schwinger effect holds. [We remind the reader that the weak-field requirement allows us to consider only positrons and electrons, not charged particles of higher mass like muons. This is reasonable since the pair-creation rate depends exponentially on the
square of the mass of the particle species.] Recall that it is the electric field strength $E=Q^*/r^2$ that is important in the pair-creation of charged particles, not the charge $Q$ \emph{per se}. In terms of electric field strength, the Schwinger formula [equation (\ref{weakfield1})] is
\begin{equation}
\Gamma = \frac{e^2}{16 \pi^4 \hbar^2} E^2 \exp{\left(-\frac{\pi m^2 \sqrt{4\pi}}{\hbar e E}\right)} \times \left[1 + O\left(\frac{e^3E}{m^2(4\pi)^{3/2}}\right)\right].
\end{equation}
For our toral geometry, this expression yields
\begin{equation}\label{weakfield2}
\Gamma = \frac{e^2Q^2}{256 \pi^8 \hbar^2 K^4 r^4} \exp{\left(-\frac{8m^2\pi^{7/2}K^2r^2}{\hbar e Q}\right)} \times \left[1 + O\left(\frac{e^3 Q}{2^5 \pi^{7/2} m^2 K^2 r^2}\right)\right].
\end{equation}
The dependence on $K^2$ in the exponential term, which dominates the Schwinger effect, is of course natural -- due to the conservation of flux, for any fixed charge, one expects a black hole with large area [that is, large $K$] to have a weaker field.

This implies that for the ``weak-field'' approximation to hold, we need
\begin{equation}
\frac{e^3}{32 \pi^{7/2} K^2 m^2} \ll \inf{\left(\frac{r^2}{Q}\right)} = \frac{\inf{(r^2)}}{\sup{(Q)}} = \frac{r^2_{\text{ext}}}{Q_{\text{ext}}} = \frac{\left(\frac{ML^2}{2\pi K^2}\right)^{\frac{2}{3}}}{(108 \pi^5 M^4 L^2 K^4)^{1/6}}=\frac{L}{2\cdot 3^{1/2} \pi^{3/2} K^2},
\end{equation}
where
\begin{equation}\label{Qext}
Q_{\text{ext}}= (108 \pi^5 M^4 L^2 K^4)^{1/6}
\end{equation}
is the extremal charge, and
\begin{equation}\label{rext}
r_{\text{ext}}=\left(\frac{ML^2}{2\pi K^2}\right)^{\frac{1}{3}}
\end{equation}
locates the event horizon of the extremal black hole.

To summarise, we have the following result.
\begin{proposition}
The weak-field condition for the validity of the Schwinger formula in the case of asymptotically locally AdS black holes with flat event horizons in $(3+1)$-dimensions is
\begin{equation}
\frac{e^3}{m^2} \ll \frac{16\pi^2 L}{\sqrt{3}},
\end{equation}
that is,
\begin{equation}
L \gg 6.6 \times 10^6 \text{cm},
\end{equation}
independent of the mass of the black hole.
\end{proposition}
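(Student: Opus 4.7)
The plan is to start from the subleading correction displayed in the expansion (\ref{weakfield2}) and convert the inequality $e^3Q/(2^5\pi^{7/2}m^2 K^2 r^2)\ll 1$ into a bound involving only $L$. Since this correction must be small for every point outside the horizon, the binding instance is $r=r_h$, because $r^2$ is minimized there. So for a given black hole we need $e^3/(32\pi^{7/2}m^2 K^2)\ll r_h^2/Q$, and to have a statement that is uniform over the family of charged flat AdS black holes to which the HW scheme applies, we should replace the right-hand side by $\inf\!\bigl(r_h^2/Q\bigr)$ taken over that family.

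Next, I would argue that this infimum is attained in the extremal limit. For fixed $M$, as $Q$ increases toward $Q_{\text{ext}}$ the outer horizon $r_h$ shrinks toward $r_{\text{ext}}$, and simultaneously $Q$ grows toward its maximum allowed value, so both numerator and denominator push $r_h^2/Q$ to its smallest value at extremality. Hence $\inf(r_h^2/Q)=r_{\text{ext}}^2/Q_{\text{ext}}$, with $r_{\text{ext}}$ and $Q_{\text{ext}}$ given by (\ref{rext}) and (\ref{Qext}).

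The computational step is then a direct substitution:
\begin{equation*}
\frac{r_{\text{ext}}^2}{Q_{\text{ext}}}=\frac{\bigl(ML^2/(2\pi K^2)\bigr)^{2/3}}{\bigl(108\pi^5 M^4 L^2 K^4\bigr)^{1/6}}=\frac{L}{2\sqrt{3}\,\pi^{3/2}K^2},
\end{equation*}
and the miracle that makes the proposition work is that the explicit $M$-dependence cancels while the $K^2$ dependence on the right-hand side is exactly matched by the $K^2$ in the left-hand side of $e^3/(32\pi^{7/2}m^2 K^2)\ll r_{\text{ext}}^2/Q_{\text{ext}}$. Rearranging, one obtains $e^3/m^2\ll 16\pi^2 L/\sqrt 3$, which is manifestly independent of both $M$ and $K$.

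Finally, one inserts the numerical values $e\approx 6\times 10^{-34}\,\text{cm}$ and $m\approx 10^{-21}e/\sqrt{4\pi}$ recorded earlier in the excerpt to convert $e^3/m^2$ into an explicit length and solve for $L$, yielding $L\gg 6.6\times 10^6\,\text{cm}$. The main obstacle I anticipate is the second step: one must verify carefully that the extremal configuration really does minimize $r_h^2/Q$ over the physically relevant family, because the sloppy identity $\inf(r^2/Q)=\inf(r^2)/\sup(Q)$ is only legitimate when the two extrema are jointly realized, which here happens only in the extremal limit. Everything else is elementary manipulation, and the conceptual payoff — that the weak-field regime is controlled by a property of the asymptotic AdS geometry rather than of the individual black hole — falls out once the $K^2$ cancellation is observed.
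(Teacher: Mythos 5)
Your proposal follows the paper's own derivation essentially step for step: bound the correction term at $r=r_h$, reduce to $e^3/(32\pi^{7/2}m^2K^2)\ll \inf(r_h^2/Q)$, evaluate the infimum at extremality using (\ref{rext}) and (\ref{Qext}), and observe that both $M$ and $K$ cancel. Your added remark that the identity $\inf(r^2/Q)=\inf(r^2)/\sup(Q)$ is only legitimate because both extrema are attained simultaneously in the extremal limit is a small improvement in rigor over the paper's presentation, which asserts that chain of equalities without comment.
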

Unlike the asymptotically flat case then, the AdS case requires us to consider large $L$, that is, \emph{small cosmological constant}, not large $M$. In fact, as we have seen, $M$ is bounded above; and in addition, as we shall show later, phase transitions also put constraints on the value the mass can take.
In addition, with the expressions for the extremal charge and the extremal horizon, that is, equations (\ref{Qext}) and (\ref{rext}), the requirement that the series approximation [equation (\ref{series})] is valid yields, for charged flat black holes, $L \gg 1.21 \times 10^8$ cm. This clearly also satisfies the inequality obtained above in Proposition 1.
Henceforth, in our numerical analysis, we shall fix $L=10^{15}$ cm for definiteness. We will discuss the effect of varying $L$ in Section 4.

The fact that the weak-field condition is independent of the black hole mass is interesting in its own right. In fact, in some sense, these toral black holes behave more like empty AdS than like asymptotically flat black holes. A simple example of this is given by calculating the maximal in-falling time from horizon to singularity for a neutral toral black hole. In the Schwarzschild case, we have
\begin{equation}
\tau_{\text{max}} = \int_0^{2M} \left(\frac{2M}{r}-1\right)^{-\frac{1}{2}} dr = \pi M,
\end{equation}
but, for a neutral toral black hole, we have instead
\begin{equation}
\tau_{\text{max}} = \int_0^{r_h} \left(\frac{2M}{\pi K^2 r}-\frac{r^2}{L^2}\right)^{-\frac{1}{2}} dr = \frac{\pi L}{3}, ~~ r_h = \sqrt[3]{\frac{2ML^2}{\pi K^2}},
\end{equation}
which is again independent of the black hole mass. This is reminiscent of the fact that the time to fall from anywhere to the ``centre'' of AdS only depends on the curvature radius. A similar observation holds in relation to the geometric cross section $\sigma$, to which we now turn.

It turns out that the usual definition of the geometric cross section for asymptotically flat black holes does \emph{not} carry over straightforwardly to the toral AdS case. Recall that the geometric cross section is by definition $\sigma=\pi b^2$, where $b$ is the maximum impact parameter for a massless particle to be captured. The computation of the impact parameter in the asymptotically flat case normally proceeds by normalizing the asymptotic energy of the particle as $\mathcal{E} \to 1$. In the asymptotically AdS case, however, $\mathcal{E} \to \infty$ toward the boundary.

Fortunately, this is misleading -- we need \emph{not} define $b$ at all for our purpose of studying emission of Hawking radiation. We are only interested in particles that can \emph{escape} the black hole to infinity, not be \emph{captured}. In the asymptotically flat case, these two notions are interchangeable since the photon orbit corresponds to the local maximum of the effective potential experienced by massless particles [see Figure (6.5) of \cite{Wald}]. However, for toral black holes, the potential reads
\begin{equation}
V[r] = \frac{J^2}{r^2}\left(\frac{r^2}{L^2} - \frac{8 M^*}{ r} + \frac{4\pi Q^{*2}}{r^2}\right),
\end{equation}
where $J$ is the angular momentum of the particle. This potential is monotonically increasing and approaches the asymptotic value $J^2/L^2$. Therefore, in our case, ``escape'' is not the same as ``capture'', indeed every ingoing massless particle reaches the black hole, but not all massless particles can escape.

Evidently, given a fixed angular momentum $J$, the particle needs to climb over the potential barrier of height $J^2/L^2$ to reach infinity. The metric, restricted on the equatorial plane, yields the equation of motion
\begin{equation}
-f(r)\left(\frac{dt}{d\lambda}\right)^2 + f(r)^{-1} \left(\frac{dr}{d\lambda}\right)^2 + r^2 \left(\frac{d\phi}{d\lambda}\right)^2 = 0,
\end{equation}
where $\lambda$ is a parameter for null geodesics, and where
\begin{equation}
f(r):=\frac{r^2}{L^2} - \frac{8 M^*}{ r} + \frac{4\pi Q^{*2}}{r^2}.
\end{equation}

We have
\begin{equation}
\mathcal{E}=f(r)\frac{dt}{d\lambda}, ~~ J=r^2\frac{d\phi}{d\lambda},
\end{equation}
where $\mathcal{E}$ is the energy the particle needs to arrive at $V_\infty = J^2/L^2$. That is, $\mathcal{E} \geq J^2/L^2$. At infinity we must have
\begin{equation}
\left(\frac{dr}{d\lambda}\right)^2 = \left[\left(\frac{J}{L}\right)^2 - \frac{J^2}{r^2}\right]f(r),
\end{equation}
which vanishes when $L=r$. One can then define the ``cross section'' $\sigma \propto L^2$, which is again independent of the black hole mass, as well as its charge. This simple expression for the cross-section agrees with the one given in \cite{KlemmVanzo}.

We are now in a position to generalize the HW analysis.

The area appearing in the Stefan-Boltzmann law in equation (\ref{SB}) is now $4\pi^2 K^2 L^2$ and so the differential equation governing mass loss is
\begin{equation}\label{massloss}
\frac{dM}{dt} = - a\pi^2 K^2 L^2 T^4 + \frac{Q}{4\pi^2 K^2 r_h}\frac{dQ}{dt},
\end{equation}
where the Hawking temperature is
\begin{equation}\label{Hawking}
T = \frac{\hbar}{2\pi^2K^2}\left[\frac{1}{r_h^2}\left(3M - \frac{Q^2}{2\pi^2 K^2 r_h}\right)\right],
\end{equation}
or, in terms of AdS length scale,
\begin{equation}
T = \hbar\left[\frac{r_h}{\pi L^2} - \frac{M}{2\pi^2 K^2 r_h^2}\right].
\end{equation}

The differential equation governing charge loss in the weak-field limit can be obtained by integrating the leading term of equation (\ref{weakfield2}); it is given by
\begin{equation}\label{chargeloss}
\frac{dQ}{dt} \approx - \frac{e^4 K^2}{1024 \pi^{19/2} \hbar m^2 K^6} \frac{Q^3}{r_h^3} \exp{\left(-\,\frac{8\pi^{7/2}K^2m^2r_h^2}{\hbar e Q}\right)}.
\end{equation}
In terms of $M^*$ and $Q^*$, these coupled ordinary differential equations read:
\begin{equation}
\begin{cases}
\dfrac{dM^*}{dt} = - \dfrac{a}{4}L^2 T^4 + \dfrac{Q^*}{r_h}\dfrac{dQ^*}{dt},
\\
\\
\dfrac{dQ^*}{dt} \approx - \dfrac{e^4}{64 \pi^{11/2} \hbar m^2} \dfrac{Q^{*3}}{r_h^3} \exp{\left(-\,\dfrac{2\pi^{3/2}m^2r_h^2}{\hbar e Q^*}\right)},
\end{cases}
\end{equation}
where the Hawking temperature is
\begin{equation}
T = \frac{\hbar}{r_h^2}\left[6M^* - \frac{4Q^{*2}}{r_h} \right] = \hbar\left[\frac{r_h}{\pi L^2} - \frac{2M^*}{r_h^2}\right].
\end{equation}
These expressions also hold in the case of planar black hole.

\addtocounter{section}{1}
\section* {\large{\textsf{3. Thermodynamics of Charged Evaporating Flat Black Holes}}}

We first note that in the case of neutral evaporating toral black holes, the rate of mass loss is
\begin{equation}
\frac{dM}{dt} = -a\pi^2 K^2 L^2 T^4 = -a \pi^2 K^2 L^2 \left[\frac{\hbar}{2\pi^2 K^2}\frac{3M}{r_h^2}\right]^4,
\end{equation}
where the event horizon is located at
\begin{equation}
r_h = \left(\frac{2ML^2}{\pi K^2}\right)^{\frac{1}{3}}.
\end{equation}
Therefore $dM/dt \propto -\, M^{4/3}$, which implies that $M(t)$ only reaches zero asymptotically. This is in contrast to the case of a Schwarzschild black hole, for which, as is well known, zero mass is attained in a finite time [though, again, the spacetime curvature near such a black hole eventually becomes so large that we have no good reason to trust semi-classical physics in the final stages of its evaporation]. It is noteworthy that even uncharged toral black holes already threaten the HH proposal [as HH themselves point out]. We will return to this point later.

It is well known that electrically neutral, [quasi-]static flat AdS black holes have a positive specific heat. However, in our set-up, in which charged flat AdS black holes are allowed to evaporate, it is \emph{a priori} possible that the specific heat can change sign at some point in the evolution of the black hole, just as Hiscock and Weems found in the case of the evaporating asymptotically flat Reissner-Nordstr\"om black hole. It is therefore important to check the specific heat of these black holes. We emphasise that, on physical grounds, one should \emph{not} hold the charge fixed when calculating the specific heat [though it can be instructive to see what happens if that is done, see below]; instead one should directly compute it using
\begin{equation}
C:=\frac{dM}{dT} = \frac{dM}{dt}\left(\frac{dT}{dt}\right)^{-1},
\end{equation}
as HW did. Now note that $dM/dt$ is always negative. Thus, the sign of the specific heat is the opposite of the sign of $dT/dt$.

For any fixed compactification parameter $K$, we shall prove that, as one would expect\footnote{This still needs to be checked explicitly since it is \emph{possible} that the horizon area is \emph{not} monotonically decreasing. In fact, for some initial conditions, [asymptotically flat] Kerr black holes lose angular momentum much more rapidly than mass, resulting in their horizon area initially \emph{increasing} as they evaporate \cite{kn:page}.}, the black hole gets smaller as it evaporates [the same proof, \emph{mutatis mutandis}, also holds for charged planar AdS black holes, as well as asymptotically flat Reissner-Nordstr\"om black holes]:
\begin{proposition}
The value of the radial coordinate at event horizon, $r_h(t)$, is a monotonically decreasing function of time.
\end{proposition}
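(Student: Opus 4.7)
The plan is to differentiate the event-horizon equation with respect to the proper time at conformal infinity, and then to eliminate the charge-loss term using the mass-loss equation (\ref{massloss}). Since the horizon is defined by the polynomial identity $r_h^4/L^2 - 8\pi M^* r_h + 4\pi Q^{*2}=0$, and $M^*$, $Q^*$ and $r_h$ all evolve with $t$, implicit differentiation yields
\begin{equation*}
\left(\frac{4 r_h^{3}}{L^{2}}\,-\,8\pi M^{*}\right)\dot r_h\;=\;8\pi\,r_h\,\dot M^{*}\;-\;8\pi\,Q^{*}\,\dot Q^{*}.
\end{equation*}
The coefficient of $\dot r_h$ on the left can be rewritten, by using the horizon equation itself to eliminate $M^{*}$, as $r_h^{2}\cdot (4\pi T/\hbar)$, which is manifestly positive for any non-extremal black hole. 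Hence $\mathrm{sgn}\,\dot r_h = \mathrm{sgn}\bigl[\dot M^{*}-(Q^{*}/r_h)\,\dot Q^{*}\bigr]$.

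Next, I would substitute the $(M^{*},Q^{*})$ form of the mass-loss equation, observing the algebraic cancellation
\begin{equation*}
\dot M^{*}\,-\,\frac{Q^{*}}{r_h}\,\dot Q^{*}\;=\;-\,\frac{a}{4}\,L^{2}\,T^{4},
\end{equation*}
in which the Schwinger (charge-loss) contribution drops out identically and only the manifestly negative Stefan-Boltzmann piece survives. Combining the two displays then gives the explicit expression
\begin{equation*}
\dot r_h\;=\;-\,\frac{\hbar\,a\,L^{2}\,T^{3}}{2\,r_h}\;<\;0,
\end{equation*}
which is the asserted monotone decrease of the horizon radius.

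The only genuine caveat is the extremal limit $T\to 0$, where the coefficient of $\dot r_h$ degenerates and the implicit-function step loses its meaning; but the regime relevant to the Harlow-Hayden argument is precisely the ``cold but not extremal'' window, so $T>0$ throughout the evolution and $\dot r_h$ is strictly negative and finite. The same calculation transfers verbatim to the planar limit $K\to\infty$ (with $M^{*}$, $Q^{*}$ held finite), and, \emph{mutatis mutandis}, to the asymptotically flat Reissner-Nordstr\"om case originally treated by Hiscock and Weems, since the underlying logic is nothing more than the first law $T\,dS^{*}=dM^{*}-(Q^{*}/r_h)\,dQ^{*}$ combined with the fact that the radiative part of $\dot M^{*}$ is intrinsically negative. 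I expect the whole proof to occupy only a few lines, the sole nontrivial step being the identification of the coefficient of $\dot r_h$ with (a positive multiple of) the surface gravity; there is no real obstacle to overcome, only a bookkeeping verification.
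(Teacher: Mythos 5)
Your proposal is correct and follows essentially the same route as the paper's own proof: implicitly differentiate the horizon-defining equation, recognize the coefficient of $\dot r_h$ as a positive multiple of the Hawking temperature, and substitute the mass-loss equation so that the Schwinger term cancels, leaving only the negative Stefan--Boltzmann contribution. Your final expression $\dot r_h = -\hbar a L^2 T^3/(2 r_h)$ agrees exactly with the paper's $\tfrac{4\pi T}{\hbar}\dot r_h = -2a\pi L^2 T^4/r_h$, and your remark about the degenerate extremal limit matches the paper's ``equality only at $T=0$'' caveat.
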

\begin{proof}
The defining equation of the event horizon is, from equation (\ref{ARRGH}),
\begin{equation}
0 = \frac{r_h^2}{L^2} - \frac{2M}{\pi K^2 r_h} + \frac{Q^2}{4\pi^3 K^4 r_h^2}.
\end{equation}
Taking the derivative with respect to $t$, we obtain
\begin{equation}
0 = \left(\frac{2r_h}{L^2} + \frac{2M}{\pi K^2 r_h^2} - \frac{Q^2}{2\pi^3K^4r_h^3}\right)\frac{dr_h}{dt} - \frac{2}{\pi K^2 r_h}\frac{dM}{dt} + \frac{Q}{2\pi^3 K^4 r_h^2}\frac{dQ}{dt}.
\end{equation}
The expression in the brackets is just $4\pi/\hbar$ times the Hawking temperature, and so
\begin{equation}\label{TMQrelation}
\frac{4\pi T}{\hbar}\frac{dr_h}{dt} = \frac{2}{\pi K^2 r_h}\frac{dM}{dt} - \frac{Q}{2\pi^3 K^4 r_h^2}\frac{dQ}{dt}.
\end{equation}
Upon substituting this into the mass loss equation, equation (\ref{massloss}), we find that the $dQ/dt$ term cancels [of course $dr_h/dt$ still implicitly depends on charge loss rate via $T=T(M,Q)$], and we are left with:
\begin{equation}
\frac{4\pi T}{\hbar}\frac{dr_h}{dt} = -\frac{2a\pi L^2 T^4}{r_h} \leq 0,
\end{equation}
with equality attained only in the extremal case, at which $T=0$.
\begin{flushright}\qed
\end{flushright}

\end{proof}

We may describe the evolution of the generic horizon by means of a dimensionless function $\gamma(t)$, defined by
\begin{equation}\label{r}
r_h^3(t) = \frac{\gamma(t)M(t)L^2}{\pi K^2},
\end{equation}
where $\gamma(t) \in [1/2, 2]$ is \emph{not} necessarily monotonically decreasing.
The case $\gamma=2$ corresponds to a neutral black hole, while $\gamma=1/2$ describes an extremal black hole. Note that, due to the competition between $\gamma(t)$ and $M(t)$, we cannot decide, by appealing to the monotonicity of $r_h$ alone, whether the black hole will evolve towards the extremal limit or towards the zero-mass limit. [In principle, one can solve for $r_h$ explicitly from the metric, but the expression is too complicated to be of practical use for analytic calculations.]

From the expression for the Hawking temperature in equation (\ref{Hawking}), we can now compute its time derivative:
\begin{equation}\label{T1}
\frac{dT}{dt} = \hbar \left[\left(\frac{1}{\pi L^2} + \frac{M}{\pi^2 K^2 r_h^3}\right)\frac{dr_h}{dt} - \frac{1}{2\pi^2 K^2 r_h^2}\frac{dM}{dt}\right],
\end{equation}
where
\begin{equation}\label{T2}
\frac{dr_h}{dt} = \frac{1}{3}\left(\frac{\gamma ML^2}{\pi K^2}\right)^{-\frac{2}{3}}\left(\frac{\gamma L^2}{\pi K^2}\frac{dM}{dt} + \frac{ML^2}{\pi K^2}\frac{d\gamma}{dt}\right).
\end{equation}
We can now compute the specific heat.
First we note that the expression
\begin{equation}
\frac{1}{3}\left(\frac{1}{\pi L^2} + \frac{M}{\pi^2 K^2 r_h^3}\right)\left(\frac{\gamma ML^2}{\pi K^2}\right)^{-\frac{2}{3}}\frac{\gamma L^2}{\pi K^2} - \frac{1}{2\pi^2K^2r_h^2}
\end{equation}
can be simplified to
\begin{equation}
\left[\frac{1}{3}\left(1+\gamma\right)-\frac{1}{2}\right]\left(\gamma \pi^2 KL^2 M\right)^{-2/3}.
\end{equation}
Since $\gamma \in [1/2, 2]$, this expression is always positive except for the extremal case in which the expression is identically zero. Thus
\begin{flalign}
\frac{dT}{dt} =& ~\hbar \left[\frac{1}{3}\left(1+\gamma\right)-\frac{1}{2}\right]\left(\gamma \pi^2 KL^2 M\right)^{-2/3} \frac{dM}{dt} \\ \notag &+ \frac{\hbar}{3}\left(\frac{1}{\pi L^2} + \frac{M}{\pi^2 K^2 r_h^3}\right)\left(\frac{M L^2}{\pi K^2}\right)^{\frac{1}{3}}\gamma^{-\frac{2}{3}}\frac{d\gamma}{dt},
\end{flalign}
in which the first term is negative, due to the fact that $dM/dt < 0$.
Now, $\text{sgn} (C) = -\text{sgn} (dT/dt)$. The specific heat is therefore positive only if the contribution from $d\gamma/dt$ term never becomes too positive.
Thus, indeed we cannot conclude that the black hole always has positive specific heat \emph{a priori}. Nevertheless, our numerical results, for example, the left plot of Figure (\ref{2}), do suggest that $dT/dt$ is always negative, and thus that the specific heat is always positive for evaporating charged flat black holes. In fact, the numerical results suggest that $\gamma(t)$, far from becoming too large, is in fact monotonically decreasing. [On the other hand, for some \emph{asymptotically flat} Reissner-Nordstr\"om black holes, $\gamma(t)$ does eventually change sign.] See the right plot of Figure (\ref{2}). We remark that the same result holds in the planar case.
\begin{figure}[!h]
\centering
\mbox{\subfigure{\includegraphics[width=3in]{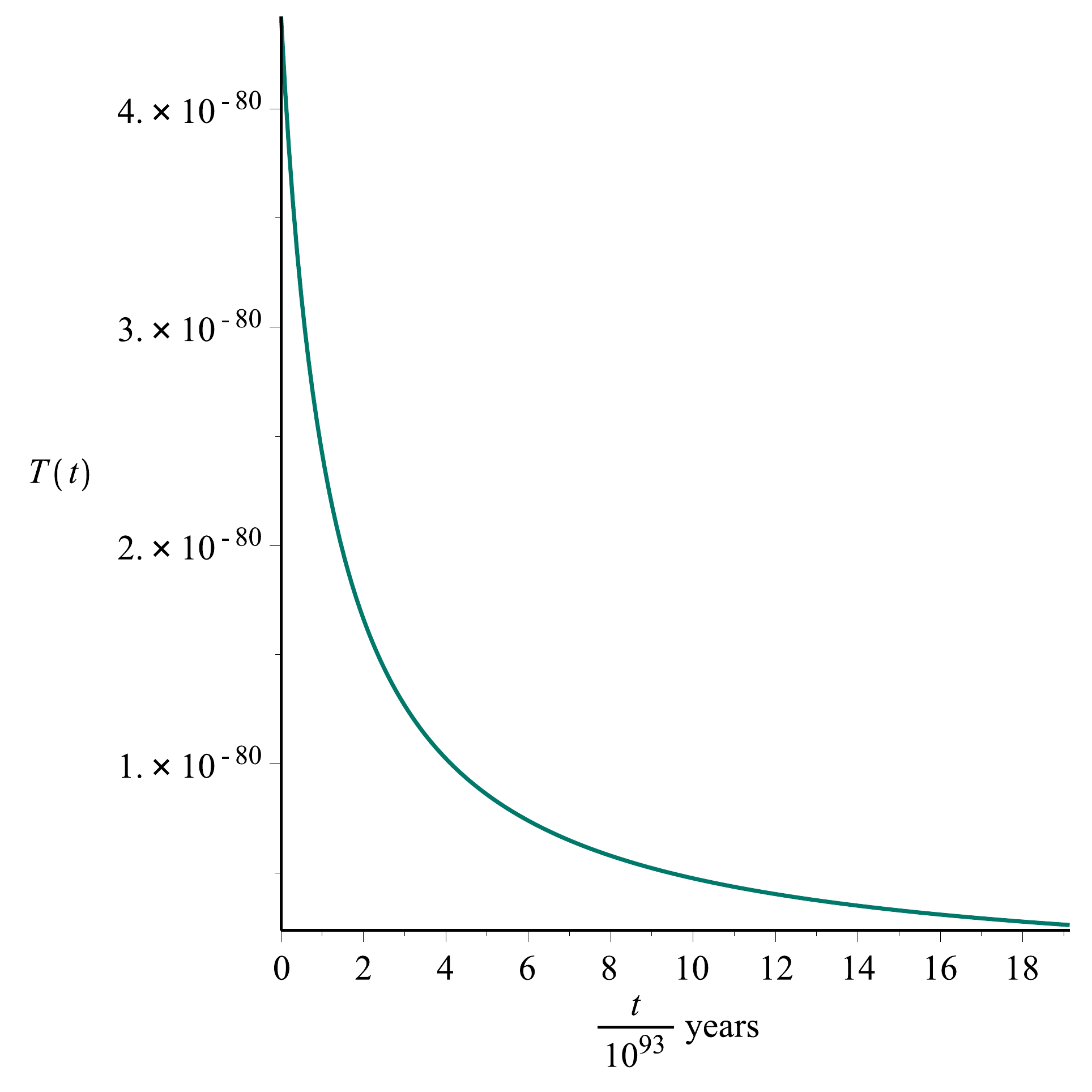}}\quad
\subfigure{\includegraphics[width=3in]{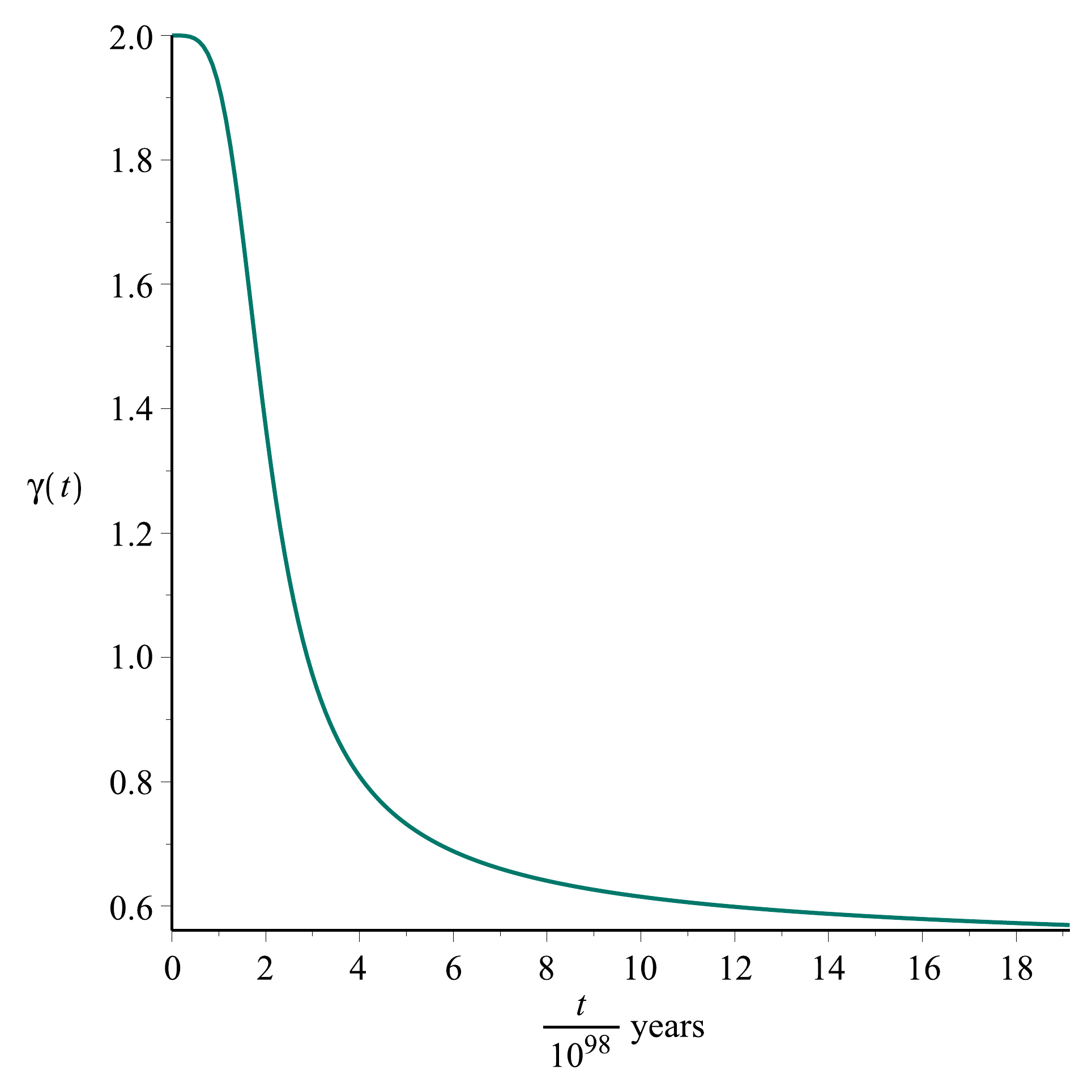} }}
\caption{\textbf{Left:} The temperature [units of centimetres] as a function of time, of a charged toral black hole with $K=1$, and initial condition $M(0)=5.6 \times 10^{20}$ cm, $Q(0)=1.7 \times 10^{9}$ cm. The initial temperature is evaluated to be about $4.42 \times 10^{-80} $ cm. \textbf{Right:} The [dimensionless] function $\gamma(t)$ of the same black hole is monotonically decreasing. Note that $\gamma(0)$ is extremely close [since the black hole is initially very close to extremal limit], but not exactly equal, to 2. \label{2}}
\end{figure}

Although, as we mentioned, on physical grounds we should not hold the charge fixed when calculating the specific heat, it is instructive to do precisely this. For, in the large mass limit, HW recover the classic result of Davies \cite{Davies}: by holding the charge fixed, one finds that sufficiently highly charged asymptotically flat Reissner-Nordstr\"om black holes have positive specific heat. In other words, the $Q=\text{const.}$ case allows us to probe certain limits of the parameter space. In fact, our numerical results in the next section show that charged flat black holes \emph{do} maintain $Q \approx \text{const.}$ along their evolutionary history, contrary to asymptotically flat spacetime intuitions\footnote{The details of the underlying physics of charge dissipation for these black holes is discussed in \cite{kn:elsewhere}.}.

For $Q = \text{const.}$, from Eq.(\ref{TMQrelation}), we have
\begin{equation}\label{TMrelation}
\frac{dr_h}{dt} = \frac{\hbar}{2\pi^2 T K^2 r_h}\frac{dM}{dt}.
\end{equation}
Substituting this expression into Eq.(\ref{T1}) and simplifying, we obtain
\begin{equation}
\frac{dT}{dt} = \hbar\left[\frac{3}{2\gamma(t)-1}\right]\frac{1}{2\pi^2K^2r_h^2}\frac{dM}{dt}.
\end{equation}
Since $\gamma(t) \in [1/2, 2]$, and $dM/dt < 0$, we see that $dT/dt$ is always negative and diverges to $-\infty$ as extremality is approached. Consequently, the specific heat is always positive [and tends to zero in the extremal limit] if we hold the electric charge fixed. The results can be appreciated from the plot of temperature as a function of $M$ and $Q$, as depicted in Figure (\ref{2(2)}). First recall that holding charge fixed means that
\begin{equation}
\frac{dT}{dt}=\frac{\partial T}{\partial M} \frac{dM}{dt},
\end{equation}
so $\text{sgn}(C)=\text{sgn}(\partial T/\partial M)$.
For some asymptotically flat Reissner-Nordstr\"om black holes, there are regions in the parameter space [where the charge is sufficiently large] in which $\partial T/\partial M$ does become positive. However, charged flat black holes do not behave in that manner.

\begin{figure}[!h]
\centering
\mbox{\subfigure{\includegraphics[width=3in]{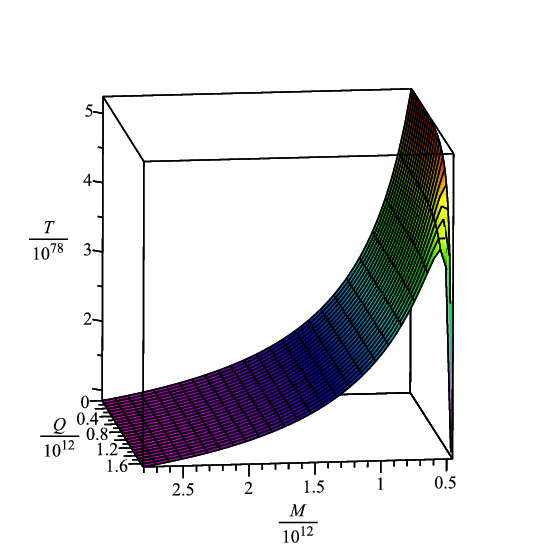}}\quad
\subfigure{\includegraphics[width=3in]{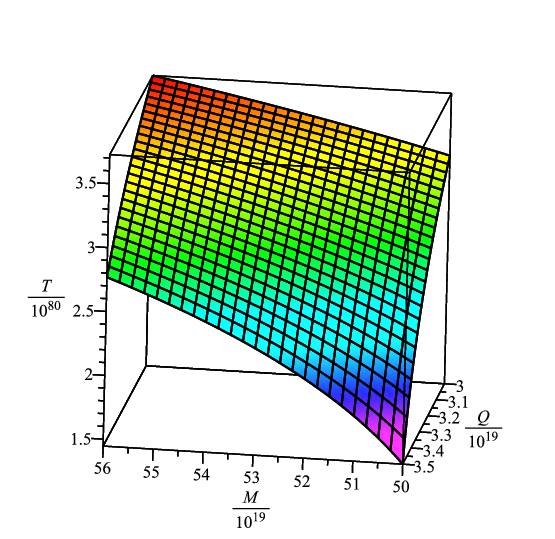} }}
\caption{\textbf{Left:} The temperature [units of centimetres] as a function of mass and charge for an asymptotically flat Reissner-Nordstr\"om black hole. \textbf{Right:} The temperature as a function of mass and charge for an AdS-Reissner-Nordstr\"om black hole with toral topology and $K=1$. \label{2(2)}}
\end{figure}

In the case of AdS black holes with spherical topology, Hawking and Page showed that cold black holes are not stable -- they undergo a phase transition into thermal AdS \cite{kn:hawpa}. For neutral toral black holes, it is known that a similar phase transition exists for cold black holes; however, the preferred state is not thermal AdS but a type of ``soliton'' \cite{kn:surya, kn:hormy}. The generalization to the charged case was considered in \cite{kn:AdSRN}, where the critical temperature below which the soliton configuration is thermodynamically preferred is found to be governed by the compactification parameter $K$:
\begin{equation}\label{PHASETEMP}
T_c = \frac{\hbar}{2\pi KL}.
\end{equation}
As with many properties of toral black hole spacetimes, this critical temperature has the property that, for any fixed $K$, it only depends on AdS length scale $L$, and independent of the mass and charge of the black hole.

If the black hole is to exist, then, the Hawking temperature must satisfy $T \geq T_c$. Explicitly,
\begin{equation}
\hbar \left[\frac{r_h}{\pi L^2}-\frac{2M}{4\pi^2 K^2 r_h^2}\right] \geq \frac{\hbar}{2\pi KL}.
\end{equation}
With the horizon parametrized by $\gamma(t)$, this yields a lower bound on the black hole mass
\begin{equation}
M(t) \gg \frac{8\pi L \gamma(t)^2}{K (4\gamma(t) -2)^3} =: M_c(t),
\end{equation}
where we have expressed the time-dependence explicitly.

However, note that $M_c$ is unbounded above as the black hole tends to extremality, that is, as $\gamma \to 1/2$. Thus we see that, even if one starts with a black hole with arbitrarily large mass, \emph{if} the black hole evolves towards the extremal limit, then the black hole mass [which is monotonically decreasing] \emph{will} eventually drop below $M_c$.

This means that, \emph{if} the phase transition temperature is not zero, then the black hole will be destroyed by a phase transition [at some very \emph{low} temperature] in a finite time. This time will be very long by normal standards, especially for black holes with large values of the compactification parameter $K$. However, the entropy of these black holes is also very large [being related to $K^2$], and this means, if the Harlow-Hayden conjecture [to the effect that the decoding time is exponential in the entropy] is correct, that the decoding time in this case is even more enormous. In every case, then, the black hole suffers a phase transition in a time which is utterly negligible relative to the decoding time.

There is however a crucial exception to this statement: the case of \emph{planar} black holes, with non-compact event horizons. For these black holes ---$\,$ which are in fact the most important ones in applications ---$\,$ there is \emph{no} phase transition, as one sees from equation (\ref{PHASETEMP}). Thus we still have a very important class of flat black holes which apparently have arbitrarily long lifetimes. This loophole must be closed, for otherwise we would arrive at the bizarre conclusion that the HH argument can only be made to work if the event horizon is compactified. We now proceed to do that.

\addtocounter{section}{1}
\section* {\large{\textsf{4. Fatal Attraction Toward Extremality}}}

As mentioned in Section 1, what we need to complete the argument is to show that, in addition to phase transitions, highly charged flat black holes are vulnerable to the brane pair-production instability discovered by Seiberg and Witten \cite{kn:seiberg}. This effect destabilizes a four-dimensional flat black hole when the electric charge is around 92\% of the extremal charge, and it does so \emph{both} in the toral \emph{and} in the planar cases.

Since the extremal charge is $Q_{\text{ext}}=(108\pi^5M^4L^2K^4)^{1/6}$, it is convenient to define
\begin{equation}\label{w}
w[M]:=\frac{(108\pi^5L^2K^4)^{1/6}}{M^{1/3}},
\end{equation}
so that the normalized charge-to-mass ratio satisfies
\begin{equation}\label{QwM}
\frac{\widetilde{Q}}{M}:=\frac{Q}{wM} \in [0,1].
\end{equation}
That is, the extremal case has $\widetilde{Q}/M = 1$.

The evolutionary history of charged evaporating flat black holes is easy to describe. Our numerical evidence indicates that, independent of the initial conditions,
they all evolve toward extremality, i.e. extremal limit is an \emph{attractor}. This is because, as shown in Figure \ref{5(2)}, the charge $Q$ remains almost constant, while the mass of the black hole monotonically decreases.
\begin{figure}[!h]
\centering
\includegraphics[width=0.80\textwidth]{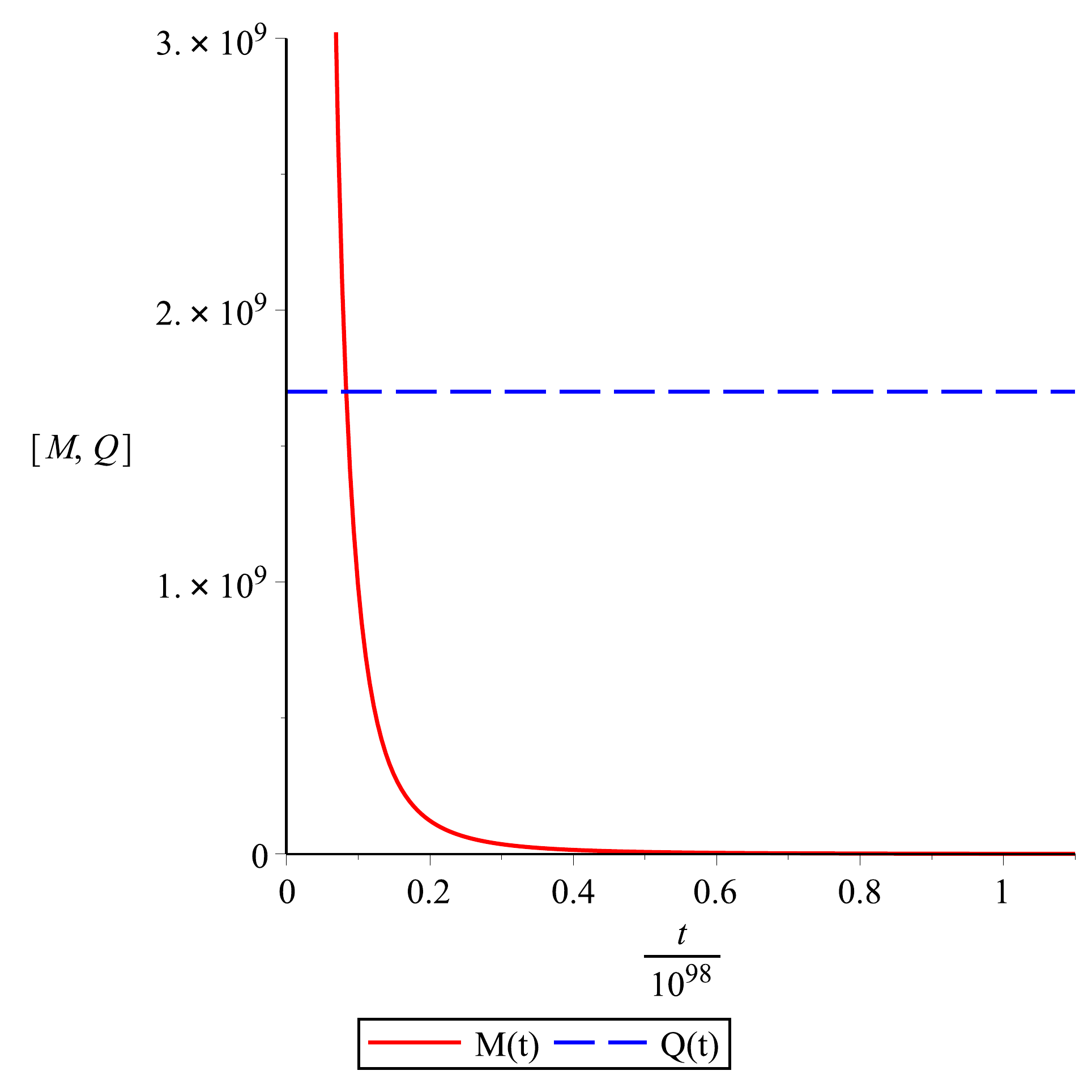}
\caption{The evolution of mass and charge of a toral black hole with $K=1$, $M(0)=5.6 \times 10^{20}$ cm, and initial charge $1.7\times10^{9}$ cm. Note that we are allowed to have $Q>M$ since the extremal black hole satisfies $Q=wM$ instead of $Q=M$. The charge $Q$ is not strictly constant, but drops by an amount too small to be noticeable at this scale. [Here, and everywhere henceforth, the units of $t$ are years.] \label{5(2)}} 
\end{figure}
An example is provided in Figure (\ref{3}), in which the initial $(\widetilde{Q}/M)^2$ ratio is tiny: $1.95 \times 10^{-21}$; yet the black hole evolves to be \emph{nearly} extremal. [Here, and henceforth, ``approaching extremality'' is conveniently defined as ``reaching $(\widetilde{Q}/M)^2=0.9$''.] This takes about $4 \times 10^{98}$ years, and it seems extremely likely that the time required actually to reach extremality is infinite\footnote{Note that even if the black hole did become extremal in finite time, we would still have the same problem -- its \emph{lifetime} still appears to be infinite.}. At this point, the [Bekenstein-Hawking] entropy is still extremely large [see Figure (\ref{3(2)}).], of the order $10^{90}$ in these units. The decoding time according to HH is exponential in numbers of this order, but it is still finite. This is our problem.

\begin{figure}[!h]
\centering
\mbox{\subfigure{\includegraphics[width=3in]{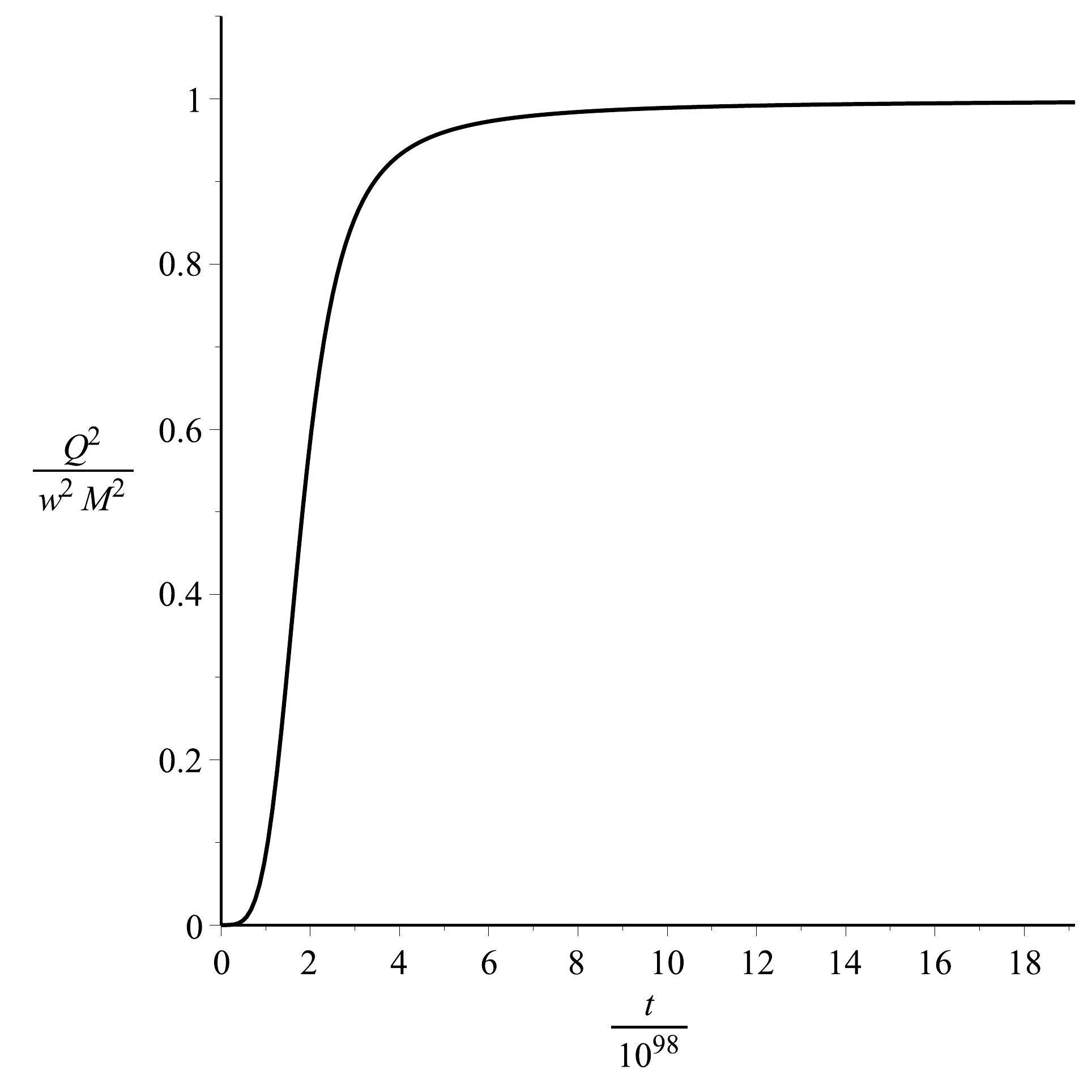}}\quad
\subfigure{\includegraphics[width=3in]{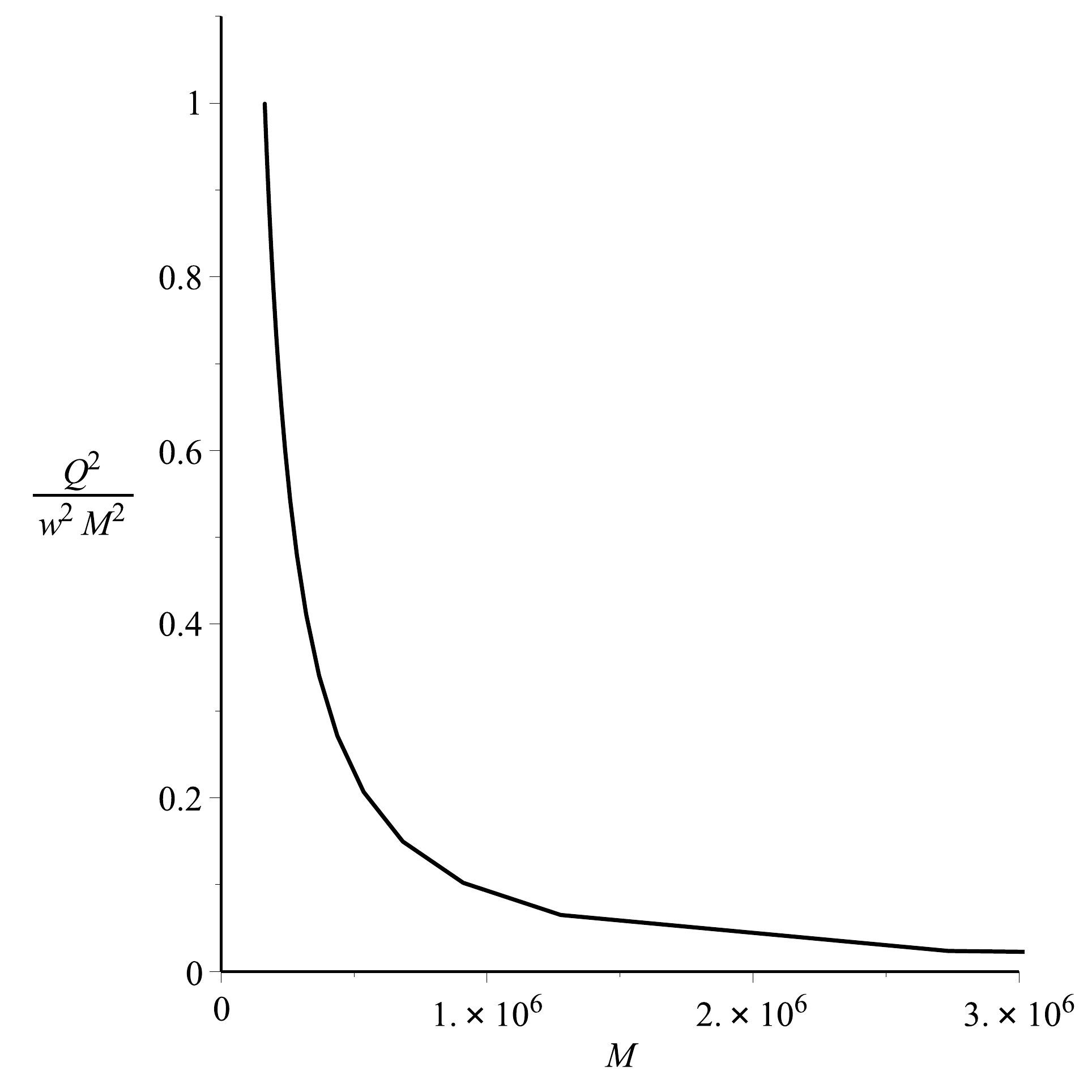} }}
\caption{\textbf{Left:} The square of the normalized charge-to-mass ratio as a function of time of a charged toral black hole with $K=1$, and initial condition $M(0)=5.6 \times 10^{20}$ cm, $Q(0)=1.7 \times 10^{9}$ cm. \textbf{Right:} The square of the normalized charge-to-mass ratio as a function of mass of the same black hole. \label{3}}
\end{figure}

\begin{figure}[!h]
\centering
\includegraphics[width=0.6\textwidth]{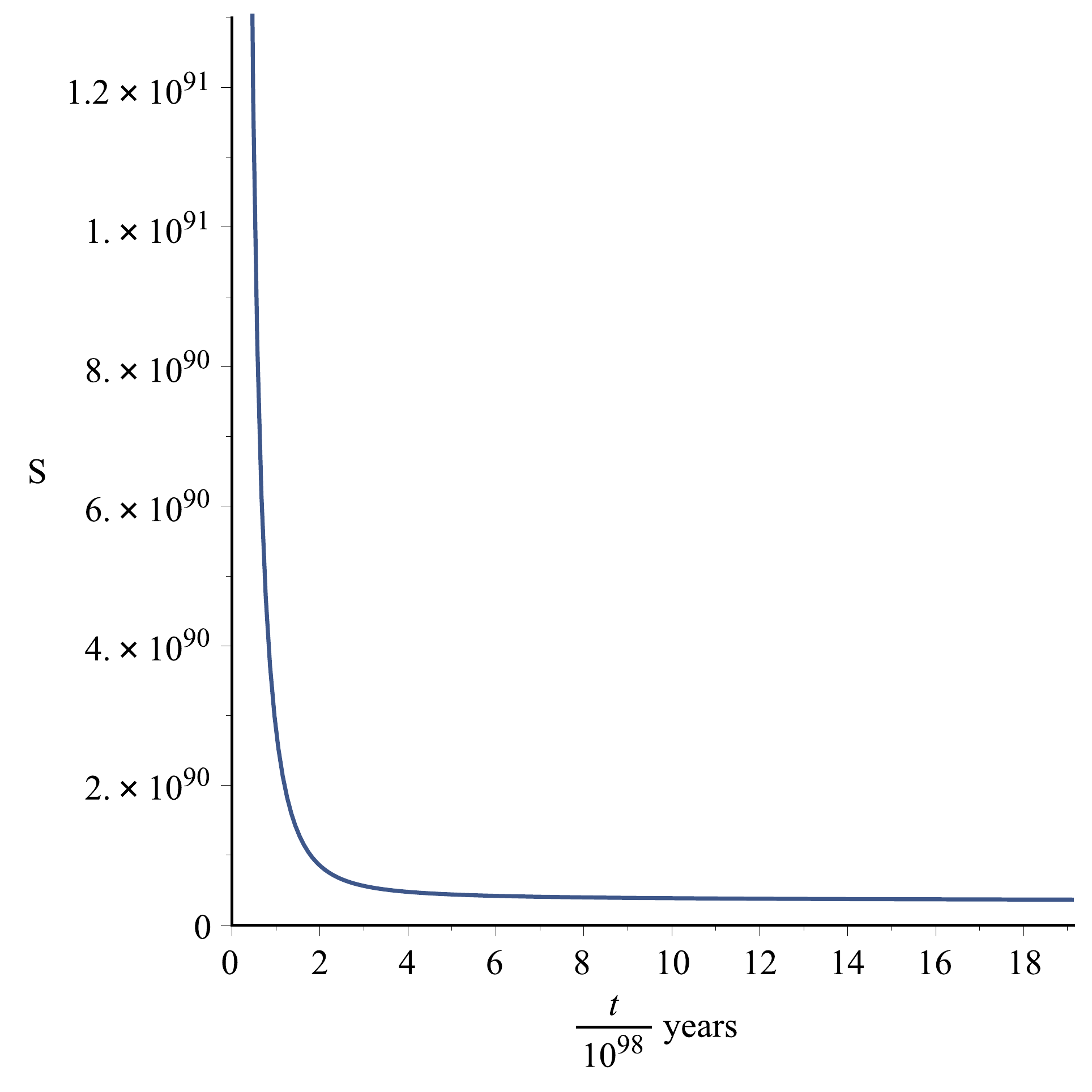}
\caption{The Bekenstein-Hawking entropy $S$, as a function of time, of a charged toral black hole with $K=1$, and initial conditions $M(0)=5.6 \times 10^{20}$ cm, $Q(0)=1.7 \times 10^{9}$ cm. Entropy, $S=A/(4\hbar)$ is a dimensionless number in our units, since $\hbar$ is an area.\label{3(2)}}
\end{figure}

The principal effect of varying the parameters is simply to modify the time scale of the attractor.
For example, a toral black hole with $K=1$, $M(0)=5.6 \times 10^{20}$ cm and $Q(0)=34.9 \times 10^{18}$ cm, that is, $(\widetilde{Q}/M)^2 = 0.82$ initially, takes about $10^{94}$ years to come close to $(\widetilde{Q}/M)^2 \approx 1$, while a black hole of the same mass, but with much lower charge, as shown in Figure (\ref{3}), takes about $10^{98}$ years. For the same initial mass and initial charge, increasing the values of $K$ lengthens the time required to approach extremality. This is shown in Figure \ref{4(2)} below. This is due to the fact that -- see equations \ref{w} and \ref{QwM} -- the initial [normalized] charge-to-mass ratio \emph{depends} on the choice of the compactification parameter $K$. On the other hand, increasing the value of $L$ extends the time it takes to approach extremality. For example, with the initial conditions $(M(0)=5.6 \times 10^{20} ~\text{cm}, ~Q(0)=1.7 \times 10^{9}~\text{cm})$, a charged toral black hole with $K=1,~L=10^{15}$ cm takes about $4 \times 10^{98}$ years to approach extremality, but, if we increase the value of $L$ to $10^{30}$ cm, the black hole now takes $10^{151}$ years to approach extremality; the time scale becomes $3 \times 10^{83}$ years if one decreases $L$ to $5 \times 10^{10}$ cm.

\begin{figure}[!h]
\centering
\includegraphics[width=0.6\textwidth]{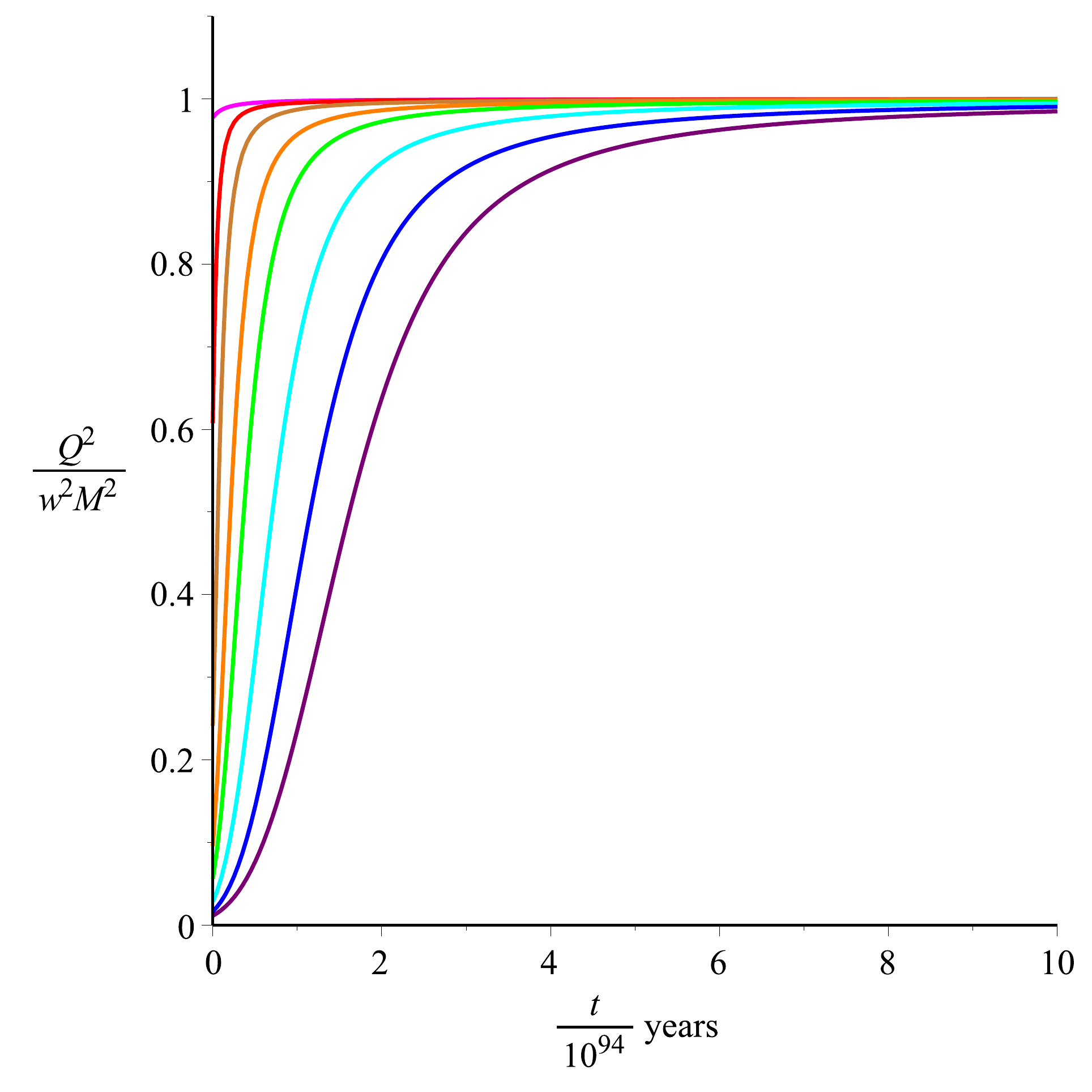}
\caption{The effect of varying the compactification parameter $K$ on the evolution of the normalized charge-to-mass ratio of a toral black hole, with initial mass and initial charge fixed to be $5.6\times 10^{20}$ cm and $3.0 \times 10^{19}$ cm, respectively. From top to bottom, the curves correspond to $K=0.7, 1, 2, 4, 6, 10, 15$ and 20, respectively. \label{4(2)}}
\end{figure}

Of course, starting with a lower value of the initial charge for a fixed initial mass also lengthens the time it takes to approach extremality.
An extreme example is shown in the left plot of Fig.(\ref{5}), in which we still keep the initial mass as $M(0)=5.6 \times 10^{20}$ cm, but set $Q(0)=6 \times 10^{-34}$ cm for a toral black hole with $K=1$. The black hole takes, as expected, a much longer time -- $~10^{119}$ years -- to approach extremality.
\begin{figure}[!h]
\centering
\mbox{\subfigure{\includegraphics[width=3in]{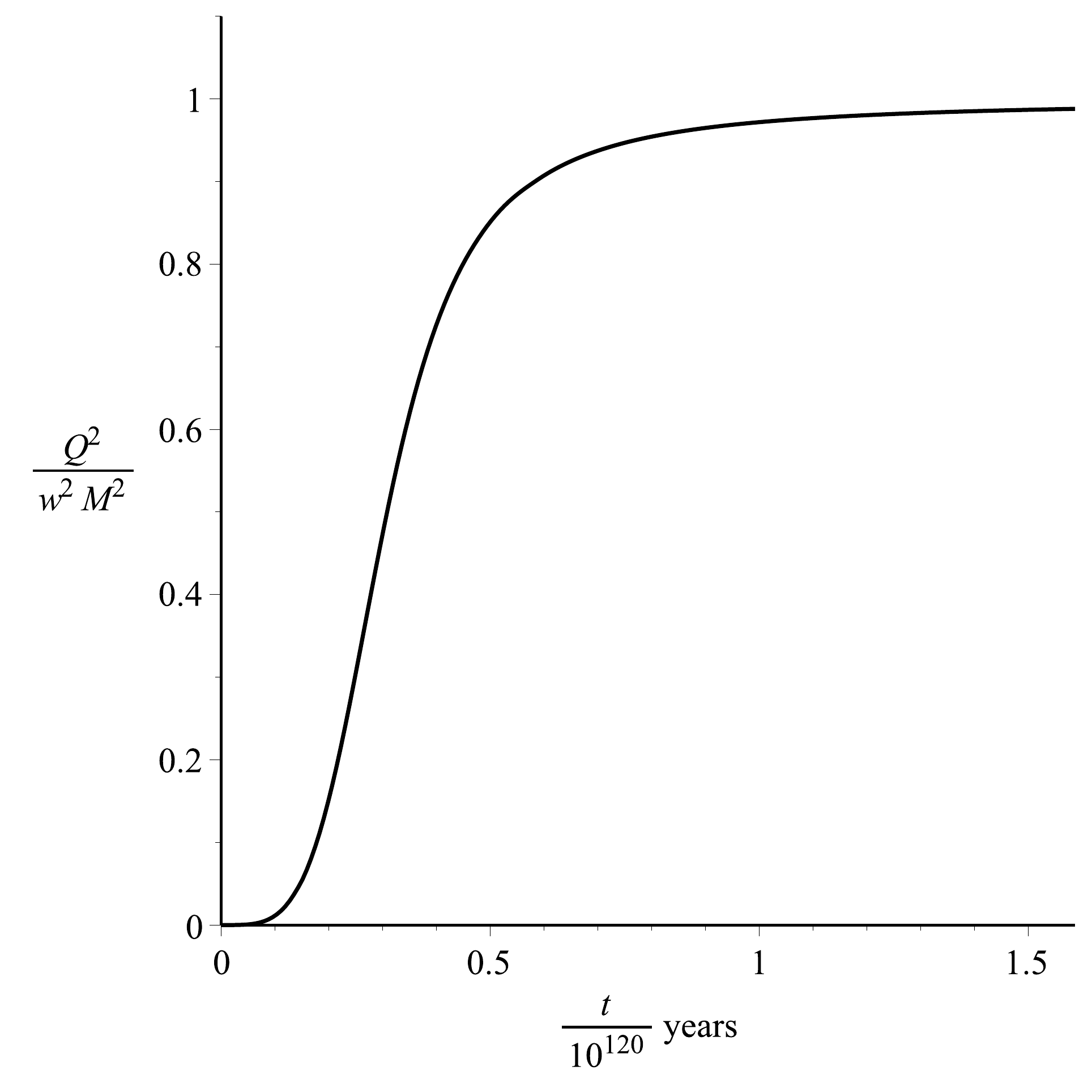}}\quad
\subfigure{\includegraphics[width=3in]{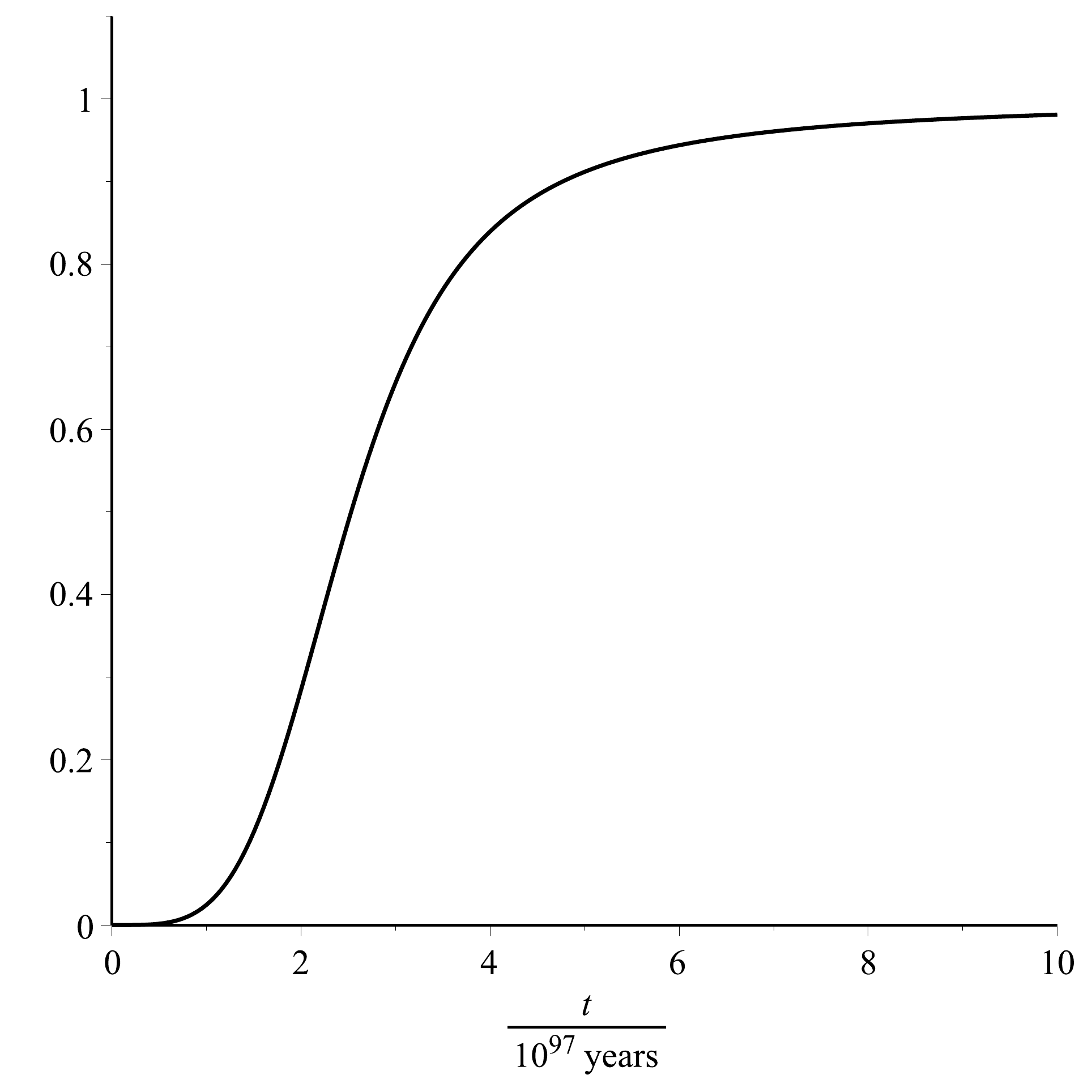} }}
\caption{\textbf{Left:} The square of the normalized charge-to-mass ratio of a toral black hole with $K=1$, initial mass $M(0)=5.6 \times 10^{20}$ cm, and initial charge $Q(0)=6 \times 10^{-34}$ cm. \textbf{Right:} The square of the normalized charge-to-mass ratio of a planar black hole with $M^*=5.6 \times 10^{30}$ cm and $Q^*=1.7 \times 10^9$ cm.  \label{5}}
\end{figure}

The results discussed above also hold for planar black holes -- one example is provided in the right plot of Fig.(\ref{5}). Thus we see that all toral and planar electrically charged AdS black holes are, as they evaporate, driven towards [and come \emph{arbitrarily close} to] extremality, on time scales which are short relative to the decoding time.

Of course, this statement does not apply to flat black holes which are \emph{exactly} electrically neutral. However, from our point of view here, such black holes should be considered unstable. If the black hole should acquire any amount of charge, \emph{no matter how small}, it will be swept away towards extremality by the evaporation process. Thus, physically, one should not regard this special case as an exception.

In short, then, a generic black hole with a flat event horizon will get steadily \emph{colder}. One might think that planar black holes, which are immune to the Hawking-Page transition discussed earlier, are therefore less at risk of being destroyed as time passes. That is not correct, as we now explain.

As the charge on any black hole increases, the geometry of the ambient spacetime changes. It follows that the geometry of any extended object in that ambient space is also affected. This is directly relevant to the AdS/CFT correspondence, because string theory in the AdS bulk does, of course, entail the existence of extended objects ---$\,$ branes. In particular, the action of a BPS brane depends on its area and its volume, and Seiberg and Witten \cite{kn:seiberg} showed that it is possible for modifications of the bulk geometry to distort the brane geometry in such a way that the consequent changes to the areas and volumes cause the brane action to become \emph{negative}. The resulting instability is a generalization of the black hole ``fragmentation'' effect on which HH hope to rely. The work of Seiberg and Witten allows us to be more explicit than was possible in \cite{kn:HH}.

Seiberg and Witten stressed that the situation is particularly delicate when the boundary geometry is [scalar-]flat ---$\,$ which is precisely the case here. In \cite{kn:AdSRN}\cite{universal} it was shown that electrically neutral AdS black holes with flat event horizons are stable in this sense, and in fact this remains true for most values of the electric charge below the extremal value. However, when the electric charge becomes sufficiently large \emph{but still sub-extremal}, the distortion of the branes does become large enough to trigger the instability. In four dimensions, this happens when the charge parameter is around 0.916 times the extremal value\footnote{In the case of an $(n+2)$-dimensional black hole, the instability is triggered when the electric charge exceeds $\sqrt{{n-1 \over n+1} \Bigg[{n\over n-1}\Bigg]^{{2n\over n+1}}}\times Q_{extremal}$. See \cite{universal} for detailed discussions.}.

Combining this with our findings in this work, we see that, as these black holes evaporate, they inevitably [unless they are destroyed in some other way first] come sufficiently close to extremality to trigger the Seiberg-Witten effect, and this happens in a time which is very short relative to the decoding time. That is, the black hole ceases to exist before its Hawking radiation can be decoded.

In the planar case, this is the only effect we need to consider, since there is no phase transition. In the toral case, however, it is possible for the hole to undergo a phase transition before the Seiberg-Witten instability arises, or vice versa. The question as to which effect actually destroys the hole can only be answered by considering each case in detail. One way to investigate this is to plot the normalized charge-to-mass ratio against the temperature, and see whether the black hole first reaches $(\widetilde{Q}/M)^2 \approx 0.84$ or $T_c$. In other words, the ultimate fate of a given black hole depends on the competition between the fall in temperature and the rise in the normalized charge-to-mass ratio. Two examples are provided in Figure (\ref{6}),
both of which describe black holes which are destroyed by the Seiberg-Witten effect.

\begin{figure}[!h]
\centering
\mbox{\subfigure{\includegraphics[width=3in]{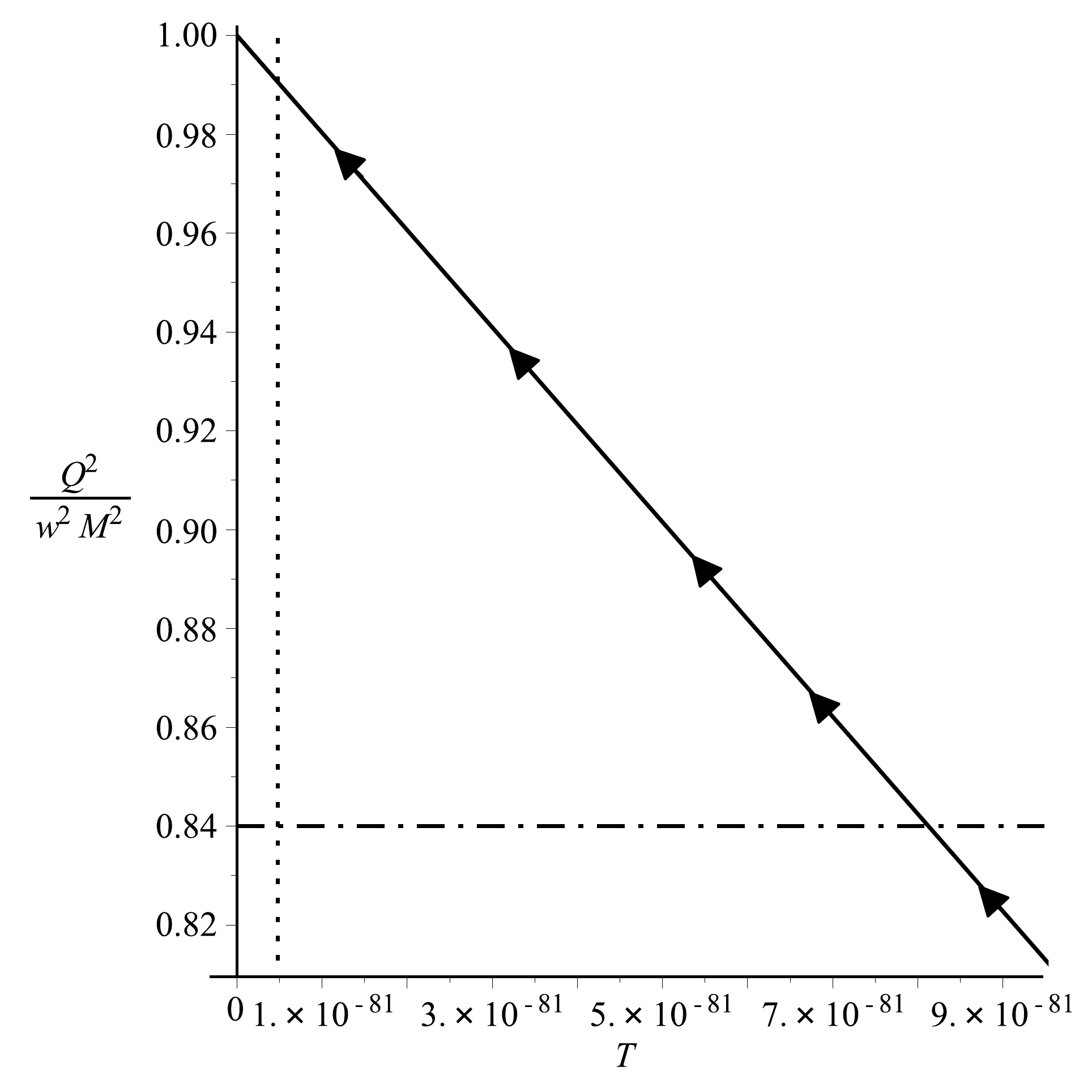}}\quad
\subfigure{\includegraphics[width=3in]{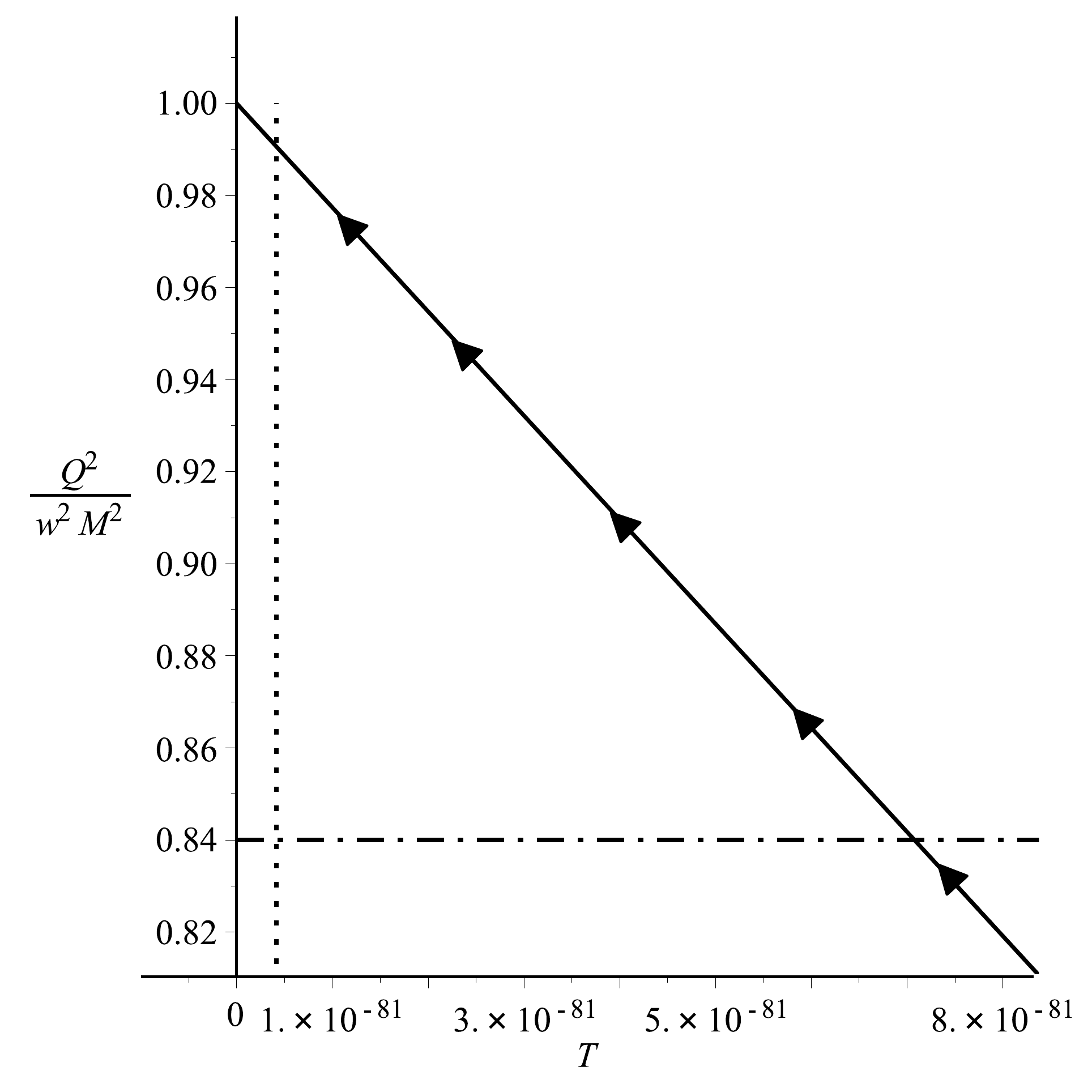} }}
\caption{The square of the normalized charge-to-mass ratio as a function of temperature for a charged toral black hole with $K=1$, and initial conditions $(M(0)=5.6 \times 10^{20} ~\text{cm}, ~Q(0)=3.0 \times 10^{9}~\text{cm})$ [\textbf{left}], and $(M(0)=5.6 \times 10^{20}~\text{cm}, ~Q(0)=6 \times 10^{-34}~\text{cm})$ [\textbf{right}], respectively.
Dotted lines indicate the critical temperature below which the black holes undergo phase transition into solitons; this, for $K=1$, is $T_c=4.16 \times 10^{-82}$ cm. The dot-dash lines indicate the threshold beyond which black holes become unstable due to the Seiberg-Witten effect. In this case, both black holes reach the dot-dash lines first. \label{6}}
\end{figure}

\begin{figure}[!h]
\centering
\includegraphics[width=0.5\textwidth]{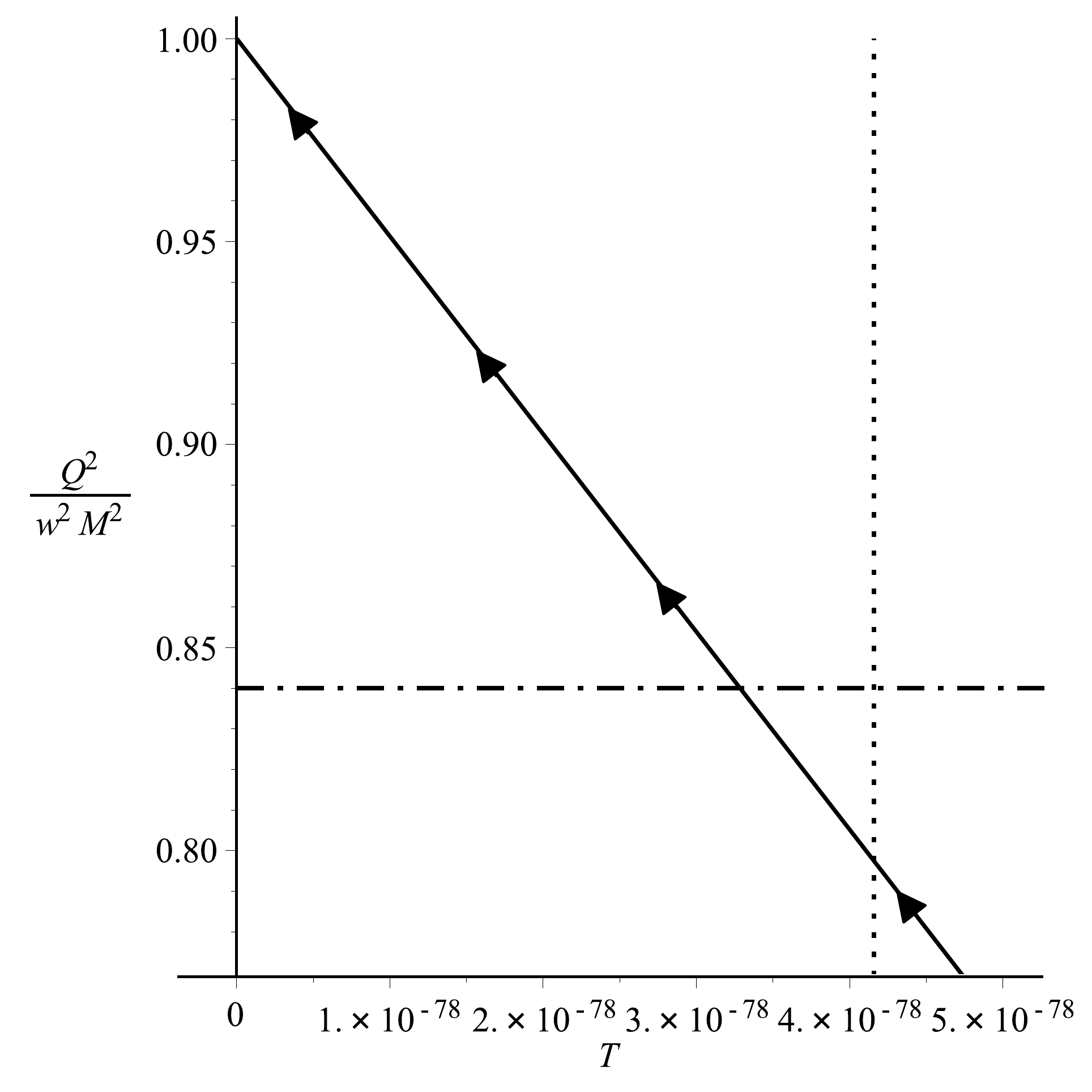}
\caption{The square of the normalized charge-to-mass ratio as a function of temperature of a charged toral black hole with $K=10^{-4}$, and initial condition $M(0)=5.6 \times 10^{20}~\text{cm}, ~Q(0)=6 \times 10^{-34}~\text{cm}$. The dot-dash line indicates the threshold beyond which black holes become unstable due to the Seiberg-Witten effect.
Dotted line indicates the critical temperature below which black holes undergo phase transition into soliton; which is $T_c=4.16 \times 10^{-78}$ cm for $K=10^{-4}$. In this example, the black hole reaches the dotted line first.
\label{7}}
\end{figure}

Since $T_c$ is controlled by $K$, we see that, for lower values of $K$, the black hole tends to be destroyed by a phase transition into a soliton, instead of by Seiberg-Witten instability: see for example Figure (\ref{7}), with $M(0)=5.6 \times 10^{20}~\text{cm}, Q(0)=6 \times 10^{-34}~\text{cm}$. These are the same initial conditions as in the right plot of Figure (\ref{6}), except that $K$ is now $10^{-4}$. The black hole now reaches the phase transition temperature first, before $Q^2/w^2M^2$ falls below 0.84.
On the other hand, for larger values of $K$, black holes tend to be destroyed by Seiberg-Witten instability rather than a phase transition.

\addtocounter{section}{1}
\section* {\large{\textsf{5. Conclusion: Hawking Radiation Cannot Be Decoded, Even if the Curvature is Small}}}
Attempts to settle the question of the unitarity of black hole evolution are plagued by uncertainties connected with quantum gravity, particularly in the high-curvature regime. This prompts the question: what can be said if we approach the problem while staying clear, as far as possible, of these uncertainties?

The AdS/CFT correspondence permits a \emph{definition} of a quantum-gravitational system in terms of a well-understood field theory at infinity. That field theory is maximally well-understood when the boundary geometry is just
flat spacetime. We therefore argue that the most reliable context for discussing these issues is provided by AdS black holes with flat event horizons, since these are dual to a field theory on a boundary which is either locally or even globally flat. We have shown that these black holes have the great virtue of evaporating towards extremality: that is, they become cold, and the curvature outside the event horizon remains very small at all times. Thus we simultaneously avoid the high-curvature regime, and probe the physics precisely in the regime where the firewall paradox is sharpest. We find that low temperatures tend to destroy such black holes, just as their duality with the quark-gluon plasma would suggest.

The destruction takes a long time by normal standards, like the evaporation of most black holes; but compared to the time required to ``decode the Hawking radiation'', it happens very quickly. In short, in the best-understood cases, \emph{Hawking radiation cannot be decoded}, confirming the claim of Harlow and Hayden.

It remains to be seen whether the fact that Hawking radiation cannot be decoded really resolves the firewall problem; we hope that our results will stimulate renewed efforts to overcome the objections raised in \cite{kn:apologia}.

\addtocounter{section}{1}
\section*{\large{\textsf{Acknowledgement}}}
Yen Chin Ong thanks Wen-Yu Wen, Keisuke Izumi, Sean Downes, Je-An Gu, Ue-Li Pen, Haret C. Rosu, Chiang-Mei Chen and Don Page for very useful discussions on various aspects of black hole physics. He also thanks Ian Anderson for conducting the MAPLE workshop during the GR20/Amaldi10 conference in Warsaw, Poland, which proved useful for this work. Brett McInnes thanks Soon Wanmei and Jude McInnes for helpful advice.
This work is supported by Taiwan National Science Council, Taiwan's National Center for Theoretical Sciences [NCTS], and the Leung Center for Cosmology and Particle Astrophysics [LeCosPA] of National Taiwan University.

\end{document}